\documentclass[a4paper,11pt]{article}

\usepackage[
    a4paper,
    left=1.8cm,
    right=1.8cm,
    top=2.0cm,
    bottom=2.0cm
]{geometry}

\usepackage[numbers,sort&compress]{natbib}

\usepackage{amssymb,amsmath,amsthm}
\usepackage{graphicx}
\usepackage{subcaption}
\usepackage{pgf,tikz}
\usetikzlibrary{arrows,arrows.meta,positioning}
\usepackage{hyperref}
\usepackage{float}
\usepackage{placeins}
\usepackage{float}

\newtheorem{proposition}{Proposition}
\newtheorem{Lemma}{Lemma}

\def\tsc#1{\csdef{#1}{\textsc{\lowercase{#1}}\xspace}}
\tsc{WGM}
\tsc{QE}
\tsc{EP}
\tsc{PMS}
\tsc{BEC}
\tsc{DE}

\title{Integrable and Chaotic 4-Dimensional Lotka--Volterra Models and Population Sustainability} \author{ H. Christodoulidi\thanks{Corresponding author: \texttt{hchristodoulidi@lincoln.ac.uk}} \\ College of Health and Science, University of Lincoln \and T.E. Kouloukas \\ School of Computing and Digital Media, London Metropolitan University \and L. B. Drossos \\ Department of Electrical and Computer Engineering, University of Peloponnese \and T. Bountis \\ Department of Mathematics, University of Patras, Patras} \date{} 
\begin{document} \maketitle


-----------------------------------------------------------


\begin{abstract}
We examine a 4-dimensional generalization of predator-prey models describing interactions among four species. We show that these systems exhibit a rich variety of asymptotic behaviors, with solutions that are either integrable, leading to non-sustainable dynamics characterized by unbounded population growth or collapse, or chaotic, where all species coexist over time. Moreover, we review the integrable cases, which are classified into two-parametric families, and examine three-parametric families that behave similarly to the known two-parametric integrable cases. We also analyze chaotic Lotka--Volterra models in terms of their Lyapunov exponents and  Poincar{\'e} surfaces of section, which turn out to reveal underlying structures that are compatible with what is expected from sustainable dynamics.   \end{abstract}


\noindent\textbf{Keywords:} Lotka--Volterra; Integrability; Chaos; Sustainability

\maketitle

\section{Introduction}
In this paper we study a 4-dimensional parametric class of predator-prey systems in terms of their dynamics and asymptotic behavior. This type of models are well-known for their significant applications in ecology, epidemics and finance. Particularly, in ecosystems such models describe the population dynamics between interacting species acting as predator, prey or both. 

One of the most famous predator-prey models is the well-known Lotka--Volterra 2-dimensional system, originally developed
independently by Alfred J. Lotka \cite{lotka} in 1925 and Vito Volterra \cite{volterra} in 1926, to describe the evolution of populations of two species, one predator and one prey. This system has one stable fixed point around which both populations co-exist at all times in closed orbits. Such sustainable solutions arise under the assumption that, in the absence of interaction, the prey population $x$ would grow exponentially, whereas the predator population $y$ would rapidly decline toward extinction. Allowing for quadratic interactions between the two species, balances  prey's exponential growth and  predator's decline, transforming their dynamics into a sustainable ecosystem with periodic behavior. 

From a dynamical systems perspective, this behavior follows from the existence of an elliptic equilibrium point in the first quadrant of the phase space, where bounded motion explicitly results from a conserved quantity $I = \delta x + \beta y - \gamma \ln{x} - \alpha \ln{y}$, $\alpha, \beta, \gamma, \delta >0$. In economics, a variation of this model is known as the Goodwin model \cite{goodwin}, and describes the cyclical interaction between the wage share of national income and the employment rate.  

There are several works published on generalizations of the classical Lotka--Volterra system \cite{balle, bog, bountis, chara, HK2019, DEKV, hern, itoh09,  KQV,RZ2025, RagniscoZullo2025, ScaliaRagniscoTirozziZullo2024, KKQTV, VKMcQR2025} in higher dimensions that consider various interaction patterns between populations and different assumptions on their reproductive dynamics. Numerical studies of four-dimensional  Lotka--Volterra systems displaying chaos can be found e.g. in \cite{HK2019, vano}.
A fundamental question which thus arises is which of these models are integrable and which display chaotic dynamics. Over the last decades, several authors have studied the integrability properties of such predator-prey models and published numerous research papers on selections of the system's parameters under which the system is Liouville integrable, thus providing a complete set of conserved quantities that are functionally independent and in involution \cite{balle, bog, chara, HK2019, DEKV, hern, itoh09, KQV, KKQTV}.

In this work, we investigate the behavior of a class of inhomogeneous Hamiltonian Lotka--Volterra systems, whose integrability for various cases of the parameters has been studied in \cite{HK2019,VKMcQR2021,VKMcQR2025} using Poisson--Lie methods. 
These Lotka--Volterra systems represent the competition between a hierarchical $N$-species interaction, where the highest-ranked species acts as a apex-predator, while the rest have a mixed predator-prey role, down to the lowest-ranked species, which serve as prey to all the others. For this class of systems, an open question arises whether the known classification is exhaustive or whether additional integrable cases exist for different choices of the system's parameters. In this paper, we focus on the integrability and chaotic behavior of $N=4$ interacting species. Alongside, we also investigate the system's parameters that lead to sustainable solutions, where all four populations co-exist and remain balanced at all times. The paper is organized as follows:

In Section \ref{themodel} we describe the model's dynamics and the log-canonical Poisson structure which yields the Hamiltonian function. In Section \ref{stabanal} we perform a classical stability analysis on the system's equilibrium points and provide appropriate parameter conditions (Proposition 1) for the existence of elliptic equilibrium points, which guarantee sustainability through bounded motion for all four populations. 

In subsection \ref{integrablecases} we review the existing integrable cases for this type of predator-prey models, and provide explicit expressions 
for the functionally independent and commuting integrals. It turns out that there are three fundamental integrals generating six two-parametric families of integrable 4D Lotka--Volterra models. 

In subsection \ref{chaoscases} we study chaotic cases of a 4-dimensional Hamiltonian Lotka--Volterra system by providing numerical results on the Lyapunov exponents and Poincar\'e surfaces of sections, which reveal rich dynamical behavior and complexity. Most importantly, chaotic cases are identified as the only ones that yield bounded solutions associated with the species coexistence, leading to the interesting conclusion that chaos is vital for long-term survival in these models.
Finally, in  subsection \ref{newcases} we extend our study to three-parametric families of 4D models to examine whether these more general systems fall into either class of integrable or chaotic cases. 

\section{The model \label{themodel}}
We consider a parametric class of Hamiltonian  Lotka--Volterra systems of the form \cite{HK2019}
\begin{equation}\label{system1}
  \dot x_i=x_i\left(\sum_{j>i}^n a_jx_{j}-\sum_{j<i}^n a_jx_{j}+r_i\right)\;, \ \ \ \ a_i,r_i \in \mathbb{R}.
\end{equation}
In this paper, we focus on the case of $n=4$ and $a_i>0$, i.e. to the four-dimensional system 
\begin{equation}\label{system2}
 \mathbf{\dot x} = \mathbf{x} \odot  (A \mathbf{x} + \mathbf{r})  
\end{equation}
where here $\mathbf{x}=(x_1,x_2,x_3,x_4)$, $\mathbf{r}=(r_1,r_2,r_3,r_4)$ and 
\begin{equation}\label{Amatrix}
  A=\begin{pmatrix}
  0 & a_2& a_3& a_4 \\
  -a_1 & 0& a_3& a_4 \\
  -a_1 & -a_2& 0& a_4 \\
   -a_1 &-a_2&-a_3& 0
\end{pmatrix}\;, \ \text{with} \  a_i>0,  
\end{equation}
is the community matrix. The operation $\odot$ denotes the Hadamard element-wise product between the two vectors and is not merely a notational convenience. Rather, it enforces the invariance of the extinction hyperplanes ($x_i=0$), thus capturing mathematically the biological reality that a species driven to zero cannot spontaneously recover. 

\begin{figure}
\centering
\begin{tikzpicture}[
    node distance=1.8cm,
    every node/.style={circle, draw, minimum size=0.2cm},
    >=Stealth
]

\node (x1) {$x_1$};
\node (x2) [below left of=x1] {$x_2$};
\node (x3) [below right of=x1] {$x_3$};
\node (x4) [below=1.8cm of x1] {$x_4$};

\draw[->] (x2) -- (x1);
\draw[->] (x3) -- (x1);
\draw[->] (x3) -- (x2);
\draw[->] (x4) -- (x2);
\draw[->] (x4) -- (x3);
\draw[->] (x4) -- (x1);
\end{tikzpicture}
\caption{Schematic representation of the hierarchical `four species' interactions: the couplings between the populations of apex-predator $x_1$, prey of all $x_4$, and the mixed predator-prey species $x_2$ and $x_3$. The arrows denote the dynamical flow. \label{graph}}
\end{figure}

\begin{figure}
\centering
    \includegraphics[width=0.8\linewidth]{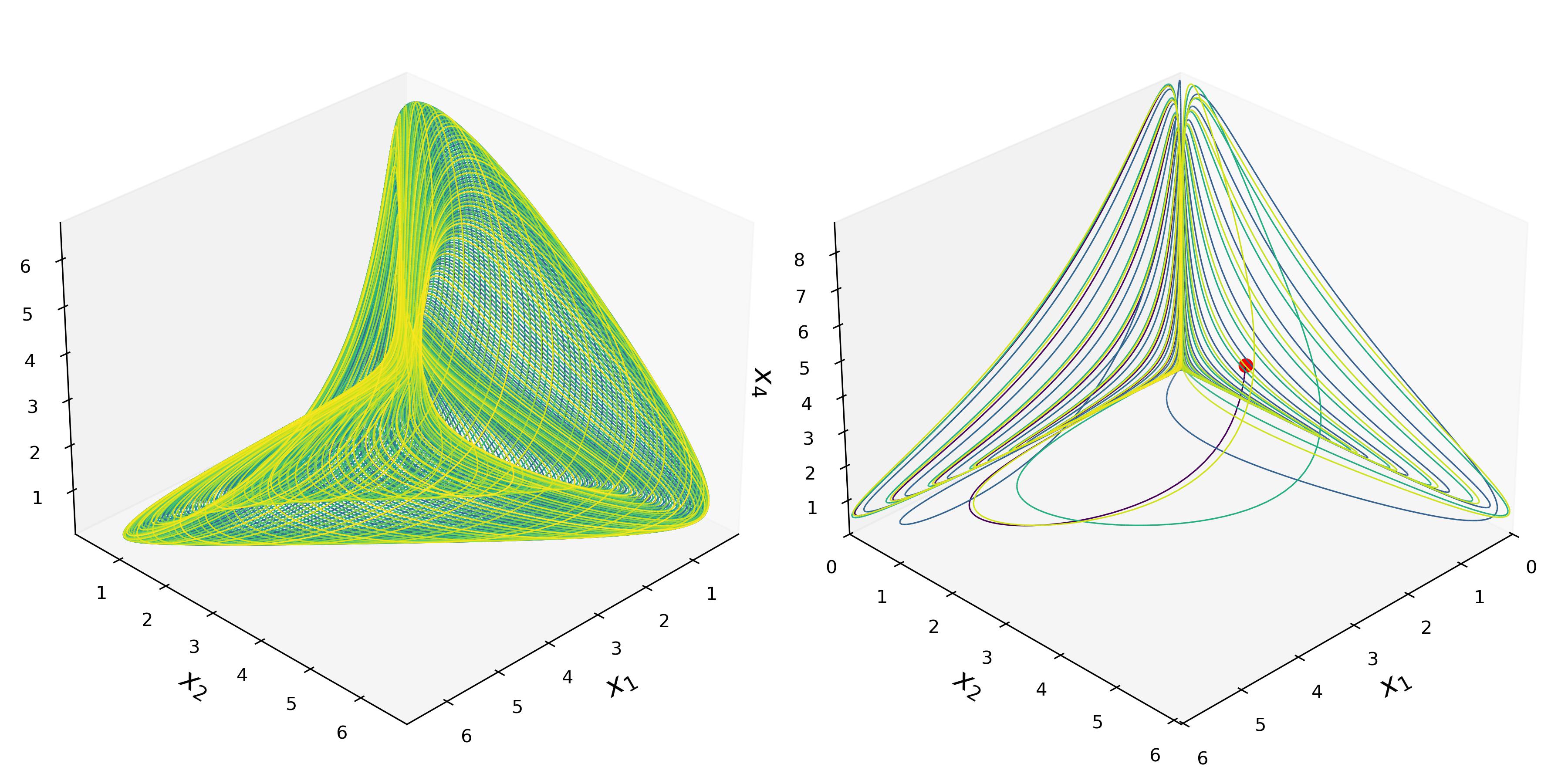}
        \includegraphics[width=0.8\linewidth]{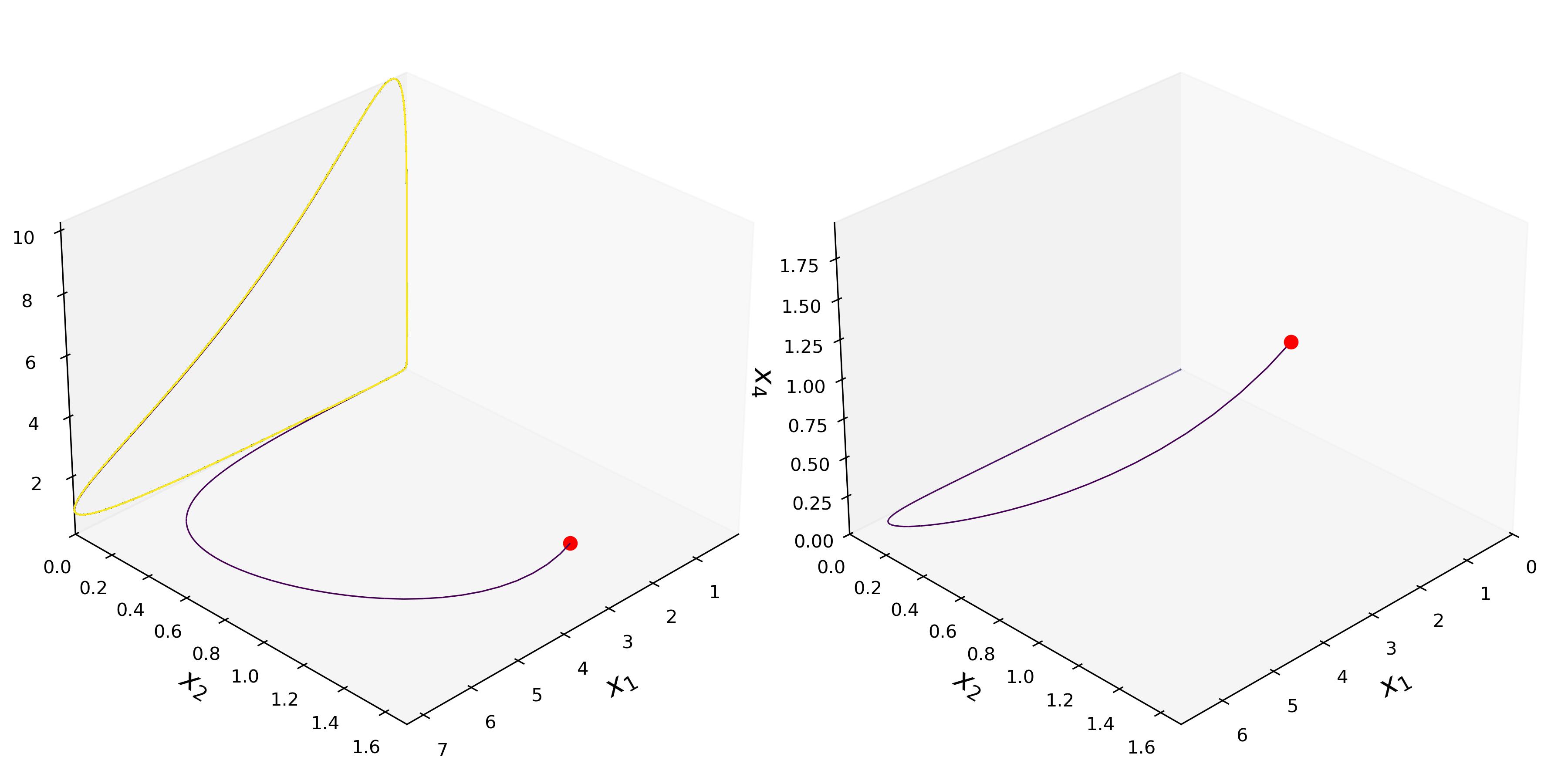}
        \caption{ Three-dimensional projections on $x_1,x_2,x_4$ of a trajectory of the system (\ref{system1}) for $a_i =1$, $i=1,\ldots,4$ at energy level $E=8$.  
        Top: (a) $k_1=k_2 =k_3 = k_4=-1$, (b) $k_1=k_2 = - 0.05, k_3 = k_4=-1$. Bottom: (c) $k_1=k_2 =0, k_3 = k_4=-1$, (d) $k_1= k_2 = 1, k_3 =  k_4=-1$. \label{fig1k}}
\end{figure}

The graph in Fig.\ref{graph} illustrates the hierarchy in the roles of the four species populations in terms of their interaction network. The top variable $x_1$, represents the population of the predator for the other three species populations $x_2,x_3$ and $x_4$. Similarly, $x_2$ is the predator of the two populations $x_3,x_4$, while $x_3$ preys exclusively on $x_4$. The value of the coefficients $a_i$ in front of the non-linear terms determine the strength of the interaction.

The system (\ref{system2}) is Hamiltonian with respect to the log-canonical Poisson structure 
\begin{equation}\label{poisson}
\{x_i,x_j\}=x_ix_j\;,\quad 1\leq i<j \leq 4 
\end{equation} 
and the Hamiltonian 
\begin{equation} \label{hamilton}
H(\mathbf{x})=\sum_{i=1}^4 (a_i x_i+k_i\log x_i) , \quad \text{when} \quad x_i>0 ,
\end{equation}
where $\mathbf{k}=(k_1,\ldots, k_4)$ are the constants 
\begin{equation} \label{k1to4}
\begin{aligned}
    k_1 &=-r_2 + r_3 - r_4  \\
    k_2&=r_1 - r_3 + r_4  \\
    k_3&=- r_1 + r_2 - r_4 \\
    k_4&=r_1 - r_2 + r_3 \quad.
\end{aligned}
\end{equation}

Furthermore, for our case where $a_i>0$, all the $a_i$ parameters  can be rescaled to $1$ without loss of generality \cite{KQV}. 
Consequently, the system takes the equivalent form 
\begin{equation} \label{oursystem}
\begin{aligned}
    \dot x_1 &= x_1 \left( x_2 + x_3 + x_4 + r_1  \right)   \\
    \dot x_2 &= x_2 \left( -x_1 + x_3 + x_4 + r_2  \right)   \\
    \dot x_3 &= x_3 \left( -x_1 - x_2 + x_4 + r_3  \right)   \\
    \dot x_4 &= x_4 \left( -x_1 - x_2 - x_3 + r_4  \right)   
\end{aligned}
\end{equation} with Hamiltonian 
\begin{equation} \label{hamilton2}
H(\mathbf{x})=\sum_{i=1}^4 (x_i+k_i\log x_i) , \quad \text{for} \quad x_i>0 .
\end{equation}

Our aim here is to study how the dynamical behavior of Hamiltonian system \eqref{oursystem} depends on the choice of the various of the parameters $k_i$. 

In Fig.\ref{fig1k} we display 3D projections of 
four representative numerical solutions, which result to distinct dynamical behavior. In these four examples we keep $k_3=k_4=-1$ and vary $k_1, k_2$. In Fig.\ref{fig1k}(a), where all $k_i$ are negative, the motion resembles winding around a `torus', 
yielding sustainable solutions for all populations.  The `butterfly' trajectory in Fig.\ref{fig1k}(b), corresponding to the near-boundary case of $k_1, k_2= -0.05$, transitions between $(x_1,x_4)$ and 
$(x_2,x_4)$ planes. This behavior can be interpreted as a competition between the two predators $x_1$ and $x_2$, where for certain periods $x_1$ preys predominantly on $x_4$ with $x_2$ being negligible and in other periods $x_2$ dominates over $x_1$. While case (b) is also bounded, it is close to an extinction threshold for the two predators under external perturbations.  
The red marker in each panel denotes the initial condition, while the trajectory coloring from dark blue to yellow represents the arrow of time. Figs.\ref{fig1k}(c) and (d) represent non-sustainable solutions. When two of the $k_i$, values are zero, for instance  when $k_1=k_2 =0$ as it is in Fig.\ref{fig1k}(c), the trajectory approaches the $(x_1,x_4)$ plane exponentially. Instead, when we consider two positive and two negative $k_i$ values, as in Fig.\ref{fig1k}(d) where $k_1= k_2 = 1, k_3 =  k_4=-1$, the trajectory collapses at the origin and all populations go extinct. 
These four examples illustrate the diversity in system's asymptotic behavior, which is explained in the next section through the stability analysis of the equilibrium points.

\section{Linear stability analysis \label{stabanal}}

A simple linear stability analysis  reveals a variety of dynamical behaviors exhibited by the system.
In particular, the eigenvalues of the Jacobian determine the local asymptotic behavior of the trajectories. In a generic case, where both linear coefficients $r_i$ and their transformed values $k_i$, $i=1,\ldots,4$ are non-zero, the above system (\ref{oursystem}) contains 8 equilibrium points. When either $k_i$ or $r_i$ take zero values, the number equilibrium points reduces, as some points fall into the same case.

Table \ref{table} classifies all equilibrium points and their eigenvalues, which determine the stability of nearby orbits. The following proposition focuses on the  cases yielding an elliptic equilibrium point, for which the evolution of every population $x_i(t)$ remains locally bounded over time.

\begin{table}[h]
\centering
\begin{tabular}{|c|c|c|}
\hline
\textbf{Location} & \textbf{Equilibrium Point} & \textbf{Eigenvalues} \\
\hline
Origin & $(0,0,0,0)$ & $\lambda_{1,2,3,4} = r_{1,2,3,4}$ \\
$x_1=x_2=0$ plane & $(0,0,-r_3,r_4)$ & 
$\lambda_{1} = k_2$, $\lambda_{2} = - k_1$,
$\lambda_{3,4} = r_3 + r_4 \pm \sqrt{r_3 r_4} $ \\
$x_1=x_3=0$ plane & $(0,r_4,0,-r_2)$ & 
$\lambda_{1} = - k_3$, $\lambda_{2} =  k_1$,
$\lambda_{3,4} = \pm \sqrt{r_2 r_4} $ \\
$x_1=x_4=0$ plane & $(0,r_3,-r_2,0)$ & $\lambda_{1} = k_{4}, \lambda_{2} = -k_{1} $,
$\lambda_{3,4} = \pm \sqrt{r_2 r_3} $ \\
$x_2=x_3=0$ plane & $(r_4,0,0,-r_1)$ & $\lambda_{1} = k_{3}, \lambda_{2} = - k_2  $,
$\lambda_{3,4} = \pm \sqrt{r_1 r_4} $ \\
$x_2=x_4=0$ plane & $(r_3,0,-r_1,0)$ & $\lambda_{1} = k_{4}, \lambda_{2} = - k_2$, $\lambda_{3,4} = \pm \sqrt{r_1 r_3} $ \\
$x_3=x_4=0$ plane & $(r_2,-r_1,0,0)$ & $\lambda_{1} = k_{4}, \lambda_{2} = -k_3$, $\lambda_{3,4} = \pm \sqrt{r_1 r_2} $ \\
Interior (if $k_i \ne 0$) & $(-k_1,-k_2,-k_3,-k_4)$ & $\lambda_{1,2,3,4}=\pm i 
\sqrt{\alpha \pm \sqrt{\alpha^2- 4 \beta}}/\sqrt{2}$ \\
\hline
\end{tabular}
\caption{The 8 possible equilibrium points of system (\ref{oursystem}) and their eigenvalues. The first equilibrium point is the origin and has all eigenvalues real. The following 6 points have at least two zero components and at least two real eigenvalues $\lambda_1, \lambda_2$. The last equilibrium point (interior) is elliptic for appropriate signs of $\alpha, \beta$, as defined in (\ref{alphabeta}).
}
\label{table}
\end{table}

\begin{proposition}
The 4-dimensional Lotka--Volterra system (\ref{oursystem})
 possesses a single elliptic point at $$\mathbf{x}_0=-\mathbf{k}=(r_2 - r_3 + r_4, -r_1 + r_3 - r_4, r_1-r_2+r_4, -r_1 + r_2 -r_3)$$ 
if and only if  
\[
\sum_{i<j}  k_i k_j >0  \quad \text{and} \quad \prod_{i=1}^4  k_i >0 \quad.
\]
\end{proposition}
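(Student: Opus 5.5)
The plan is to find the interior equilibrium, write the Jacobian there in a factored form, and reduce the characteristic polynomial to a biquadratic whose coefficients are $\alpha=\sum_{i<j}k_ik_j$ and $\beta=\prod_i k_i$, as in (\ref{alphabeta}).

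\textbf{Step 1: the equilibrium.} With all $a_i=1$ as in (\ref{oursystem}), the community matrix factors as $A=S$, where $S$ is skew-symmetric with $S_{ij}=1$ for $i<j$. An interior equilibrium must satisfy $S\mathbf{x}_0+\mathbf{r}=0$. Using (\ref{k1to4}) one checks row by row that $S\mathbf{k}=\mathbf{r}$; for instance, the first row gives $k_2+k_3+k_4=r_1$. Since $\det S=(\mathrm{Pf}\,S)^2=(1-1+1)^2=1$, the matrix $S$ is invertible. Hence $\mathbf{x}_0=-\mathbf{k}$ is the unique interior equilibrium, and it lies in the positive orthant exactly when every $k_i<0$. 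The statement only requires a \emph{single} elliptic point, so I would treat positivity of $\mathbf{x}_0$ as a separate, biological remark. The point is genuinely interior only when all $k_i\neq0$.

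\textbf{Step 2: the characteristic polynomial.} Write $K=\mathrm{diag}(\mathbf{k})$. The Jacobian at $\mathbf{x}_0$ is $J=\mathrm{diag}(\mathbf{x}_0)\,S=-KS$.
\begin{itemize}
\item Using $\det(\lambda-KS)=\det(\lambda-(KS)^T)=\det(\lambda+SK)=\det(\lambda+KS)$, the characteristic polynomial is even in $\lambda$: $p(\lambda)=\lambda^4+\alpha\lambda^2+\beta$.
\item The constant term is $\beta=\det K\det S=k_1k_2k_3k_4$.
\item The coefficient $\alpha$ is the sum of the $2\times2$ principal minors of $KS$. The $(i,j)$ minor is $\det\begin{pmatrix}0&k_iS_{ij}\\ k_jS_{ji}&0\end{pmatrix}=k_ik_j$, because $S_{ij}S_{ji}=-1$. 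So $\alpha=\sum_{i<j}k_ik_j$.
\end{itemize}
Setting $\mu=\lambda^2$ gives $\lambda=\pm i\sqrt{(\alpha\pm\sqrt{\alpha^2-4\beta})/2}$, which matches the last row of Table~\ref{table}.

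\textbf{Step 3: the ellipticity criterion, and the main obstacle.} All four eigenvalues are nonzero and purely imaginary if and only if both roots $\mu$ of $\mu^2+\alpha\mu+\beta$ are real and negative. That holds exactly when $\beta>0$, $\alpha>0$ and $\alpha^2\ge4\beta$.

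The hard step is removing the discriminant condition, and in general it cannot be removed.
\begin{itemize}
\item If all $k_i$ have the same sign (the biologically relevant case $\mathbf{x}_0>0$), Maclaurin's inequality gives $e_2/6\ge\sqrt{e_4}$. Hence $\alpha^2\ge36\beta>4\beta$, and the two stated conditions suffice.
\item If exactly two of the $k_i$ are negative, take $\mathbf{k}=(1,1,-4,-4)$. Then $\alpha=1>0$ and $\beta=16>0$, but $\alpha^2-4\beta<0$, so the equilibrium is a complex saddle-focus rather than elliptic.
\end{itemize}

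In writing the proof I would therefore do two things. First, prove the stated equivalence under the additional hypothesis that the $k_i$ share a common sign, which covers the interior case. Second, state the general criterion as $\alpha>0$, $\beta>0$, $\alpha^2\ge4\beta$, and note that the third condition is automatic in the same-sign case.
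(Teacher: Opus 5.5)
Your proposal is correct, and through Step 2 it is the paper's argument. The paper also obtains $\mathbf{x}_0=-A^{-1}\mathbf{r}=-\mathbf{k}$, the Jacobian $-KS$, and the biquadratic $\lambda^4+\alpha\lambda^2+\beta$ with $\alpha=\sum_{i<j}k_ik_j$ and $\beta=\prod_i k_i$. Your coefficient computation (evenness via transposition, $2\times2$ principal minors, and $\det S=(\mathrm{Pf}\,S)^2=1$ for uniqueness) just justifies what the paper writes down.

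The two arguments part ways at the discriminant. The paper asserts, without argument, that $\alpha^2-4\beta\ge 0$ for every $\mathbf{k}$. From this it concludes that all $\lambda^2<0$ iff $\alpha>0$ and $\beta>0$. That assertion is false, and your example refutes it. For $\mathbf{k}=(1,1,-4,-4)$ the characteristic polynomial is $\lambda^4+\lambda^2+16=(\lambda^2-\sqrt7\,\lambda+4)(\lambda^2+\sqrt7\,\lambda+4)$, with eigenvalues $(\pm\sqrt7\pm 3i)/2$. So the ``if'' direction of the proposition, as literally stated, fails when exactly two $k_i$ are negative. The ``only if'' direction holds in general by Vieta, since $\mu_1+\mu_2=-\alpha<0$ and $\mu_1\mu_2=\beta>0$.

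Your repair is the right one. In the same-sign case, $\alpha,\beta>0$ hold automatically and $\Delta>0$ follows from Maclaurin, or more elementarily from $\alpha>k_1k_2+k_3k_4\ge 2\sqrt{\beta}$. This covers the regime $\mathbf{k}<0$, $\mathbf{x}_0>0$ that the paper actually uses in its numerics.

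A structural alternative to the inequality is to write $J=P\,\mathrm{Hess}\,H(\mathbf{x}_0)$. Here $P=KSK$ is the log-canonical Poisson matrix at $\mathbf{x}_0$ and $\mathrm{Hess}\,H(\mathbf{x}_0)=-K^{-1}$. This Hessian is definite exactly when the $k_i$ share a sign. In that case $J$ is similar to a nonsingular skew-symmetric matrix and $\mathbf{x}_0$ is a nondegenerate extremum of $H$, which gives ellipticity and Lyapunov stability at once. The indefinite (mixed-sign) case is precisely where a Krein collision can push eigenvalues off the imaginary axis.

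In the mixed-sign regime the discriminant condition is genuinely independent of the other two. For example, $\mathbf{k}=(1,1,-0.1,-0.1)$ gives $\alpha=0.61$, $\beta=0.01$ and $\Delta>0$, so the equilibrium is elliptic but lies outside the positive orthant. Your three-condition criterion is therefore the correct general statement. If you want semisimplicity rather than merely a purely imaginary spectrum, take $\alpha^2>4\beta$ strictly.
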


\begin{proof}
The system $ \mathbf{\dot x} = f(\mathbf{x})= \mathbf{x} \odot  (A \mathbf{x} + \mathbf{r})  $ with $a_i=1$
under the Hadamard notation directly yields the equilibrium point  $ \mathbf{x}_0 = - A^{-1} \mathbf{r}=-\mathbf{k}$. The Jacobian matrix of the system, 
\[
Df = 
\begin{pmatrix}
x_2 + x_3 + x_4 + r_1 & x_1 & x_1 & x_1 \\
-x_2 & -x_1 + x_3 + x_4 + r_2 & x_2 & x_2 \\
-x_3 & -x_3 & -x_1 - x_2 + x_4 + r_3  & x_3 \\
-x_4 & - x_4 & - x_4 & -x_1 - x_2 - x_3 + r_4
\end{pmatrix} \quad,
\]
evaluated at $\mathbf{x}_0=-\mathbf{k}$, becomes
\[
Df|_{x_0} = 
\begin{pmatrix}
0 & -k_1 & -k_1 & -k_1 \\
k_2 & 0 & -k_2 & -k_2 \\
k_3 & k_3 & 0 & -k_3 \\
k_4 & k_4 & k_4 & 0
\end{pmatrix} \quad .
\]
Its characteristic polynomial takes the form   
\begin{equation} \label{charact}
\det(Df|_{x_0}  - \lambda I )
=
\lambda^4
+
\alpha \lambda^2
+ \beta
\end{equation}
with coefficients 
\begin{equation} \label{alphabeta}
\alpha = k_1 k_2 + k_1 k_3 + k_1 k_4 + k_2 k_3 + k_2 k_4 + k_3 k_4, \quad \beta = k_1 k_2 k_3 k_4 \quad .
\end{equation} 
The discriminant $\Delta = \alpha^2 - 4 \beta $ of $(\ref{charact})$ is non-negative, hence all squares of the eigenvalues $\lambda^2_i$, $i=1,\ldots,4$ of the Jacobian $Df|_{x_0}$ at $ \mathbf{x}_0 = - \mathbf{k}$ are real and given by
\begin{equation} \label{eig2}
\lambda_{1,2} ^2  = \frac{- \alpha + \sqrt{\alpha^2 - 4 \beta}}{2} , \quad 
\lambda_{3,4} ^2  = \frac{- \alpha - \sqrt{\alpha^2 - 4 \beta}}{2} \quad . 
\end{equation}
The equilibrium point $\mathbf{x}_0=-\mathbf{k}$ is elliptic when all  $\lambda^2_i$, $i=1,\ldots,4$ are negative.
It can be easily seen that $\lambda_{1,2} ^2 <0$ if and only if $\alpha >0$ and $\beta >0$. Under this condition, $\lambda_{3,4} ^2$  is always negative.

\end{proof}

\begin{figure}
    \centering
    \includegraphics[width=0.35\linewidth,
    trim=30 20 -7 20,clip]{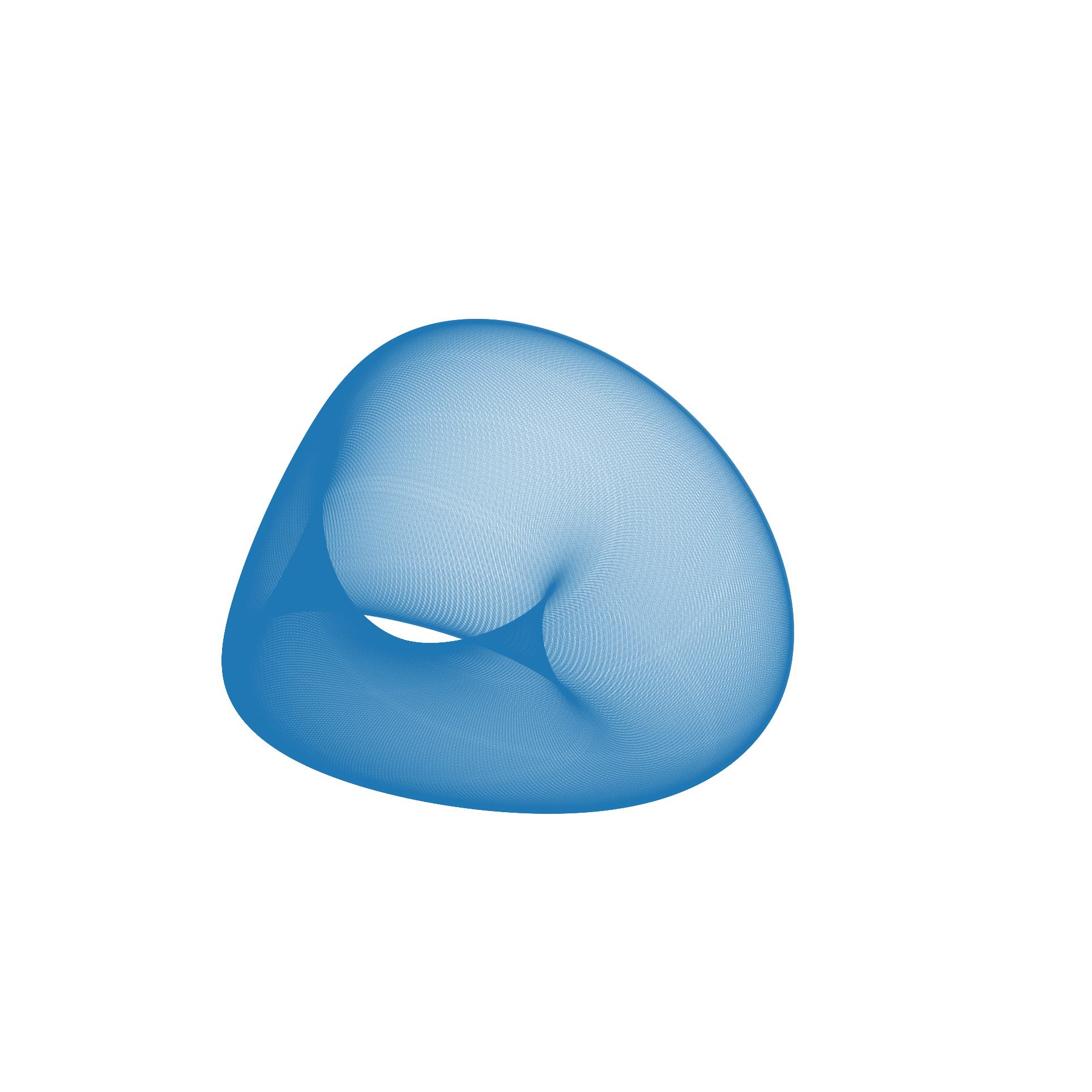}%
    \hspace{-0.11\linewidth}%
    \includegraphics[width=0.35\linewidth,
    trim=30 20 1 20,clip]{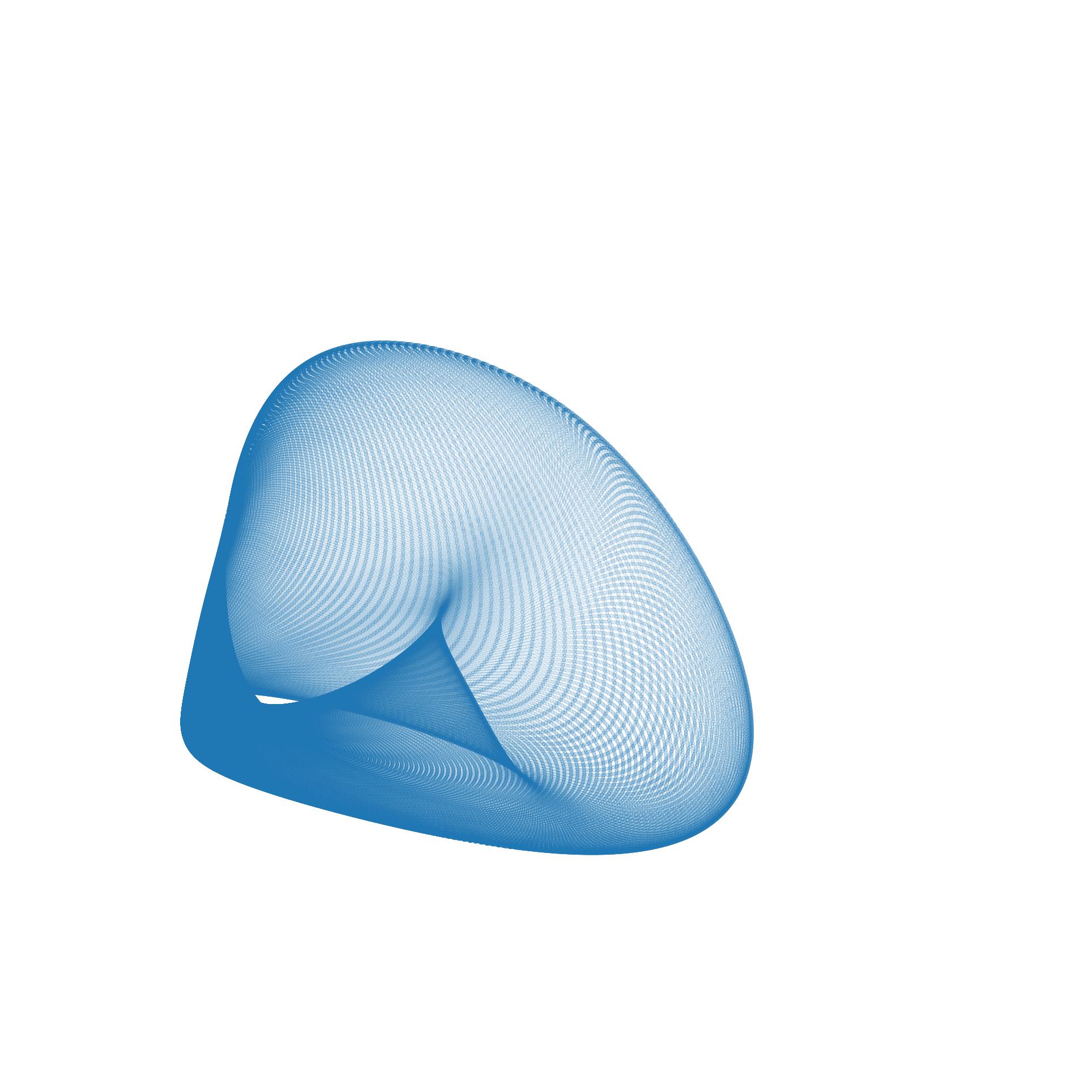}%
    \hspace{-0.11\linewidth}%
    \includegraphics[width=0.35\linewidth,
    trim=30 20 30 20,clip]{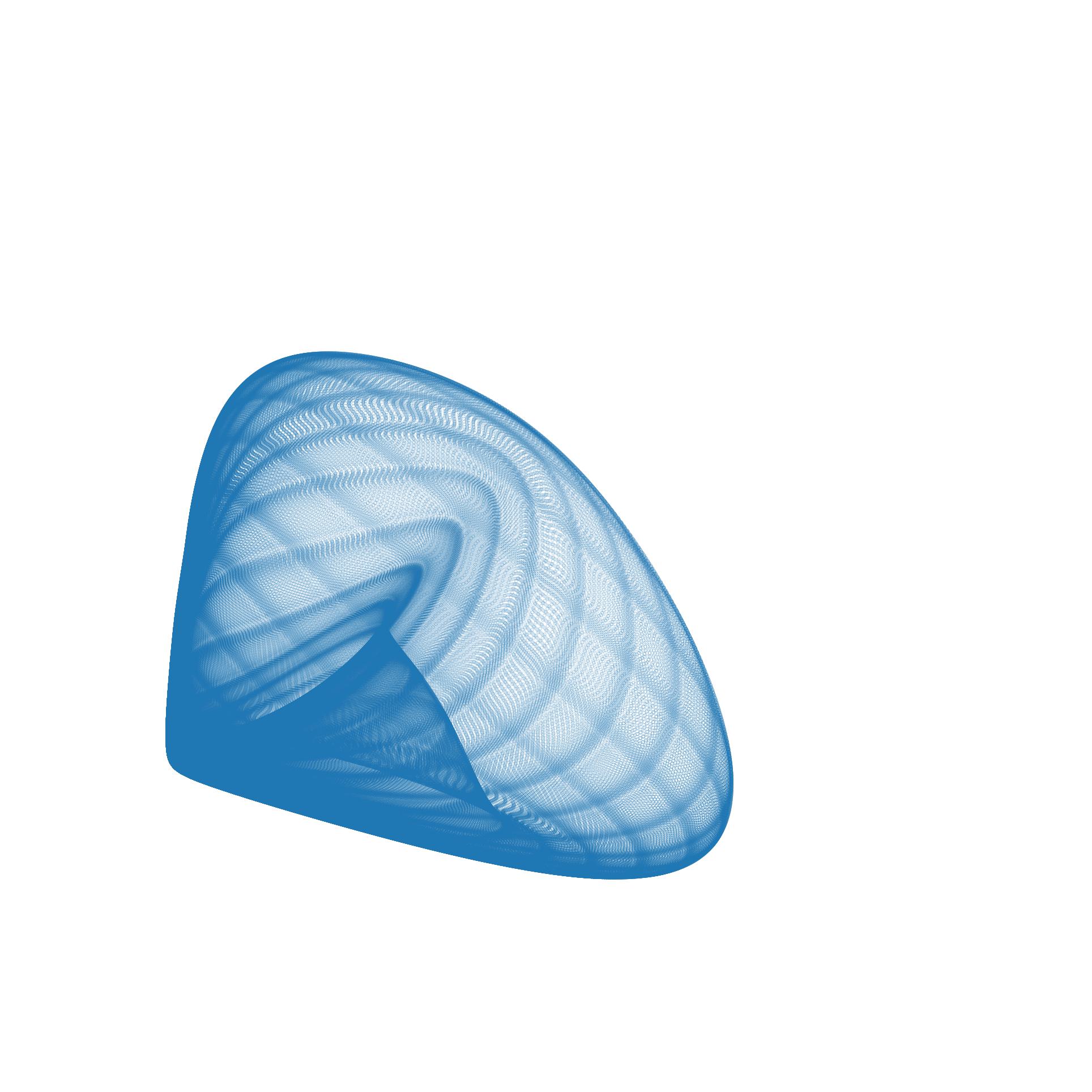}%
    \hspace{-0.11\linewidth}\\

    \vspace{-18pt}

    \includegraphics[width=0.35\linewidth,
    trim=30 20 30 20,clip]{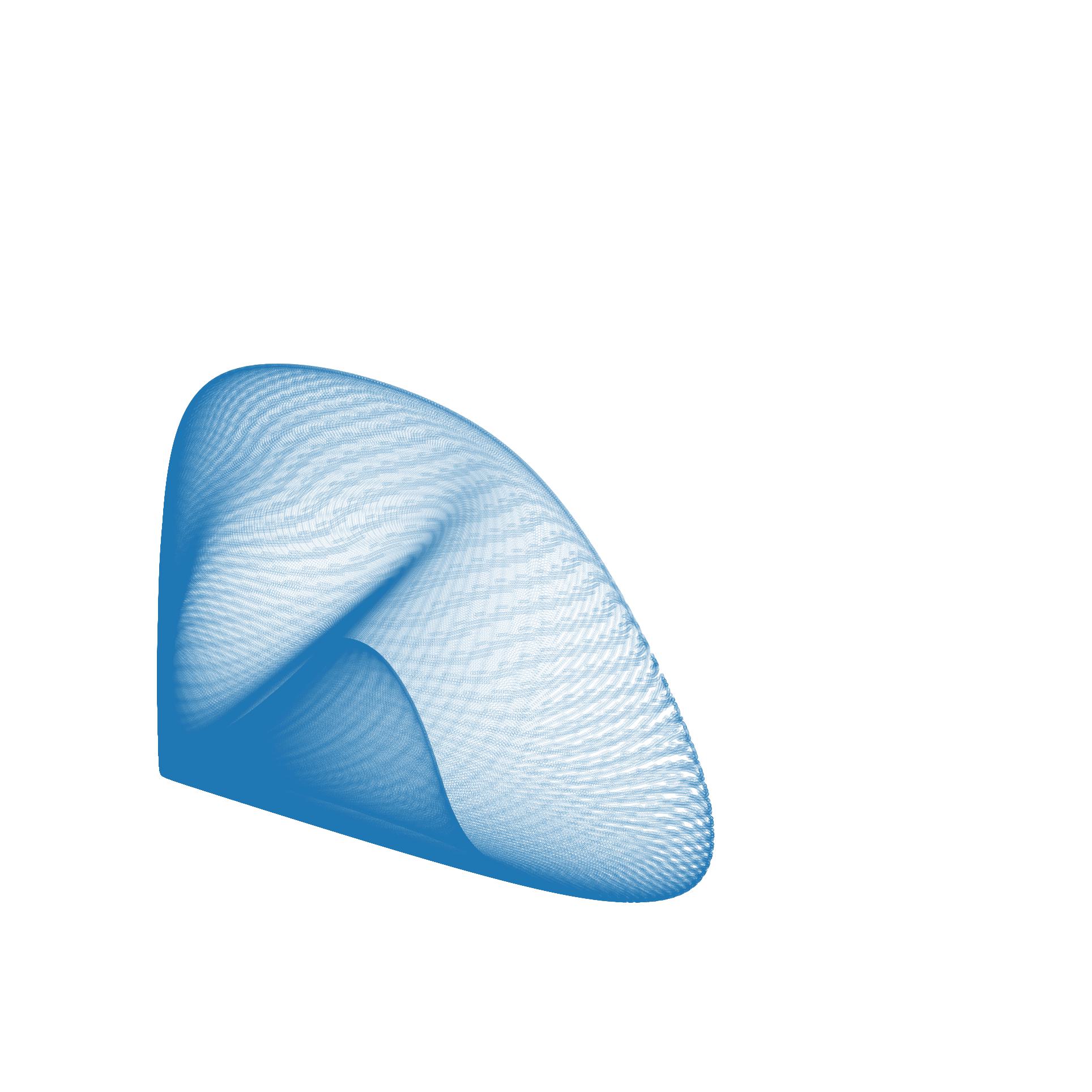}%
    \hspace{-0.11\linewidth}%
    \includegraphics[width=0.35\linewidth,
    trim=30 20 30 20,clip]{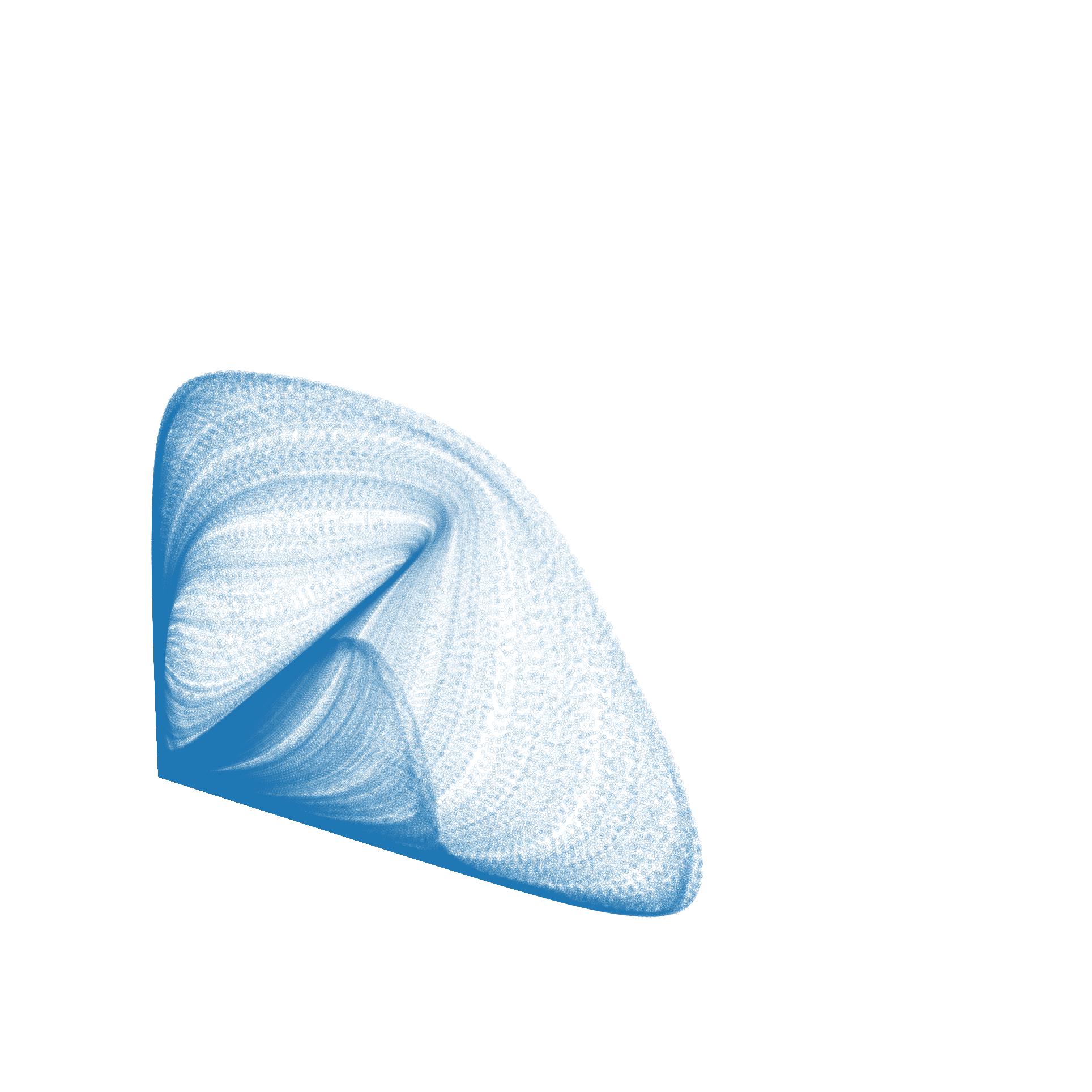}%
    \hspace{-0.11\linewidth}%
    \includegraphics[width=0.35\linewidth,
    trim=30 20 30 20,clip]{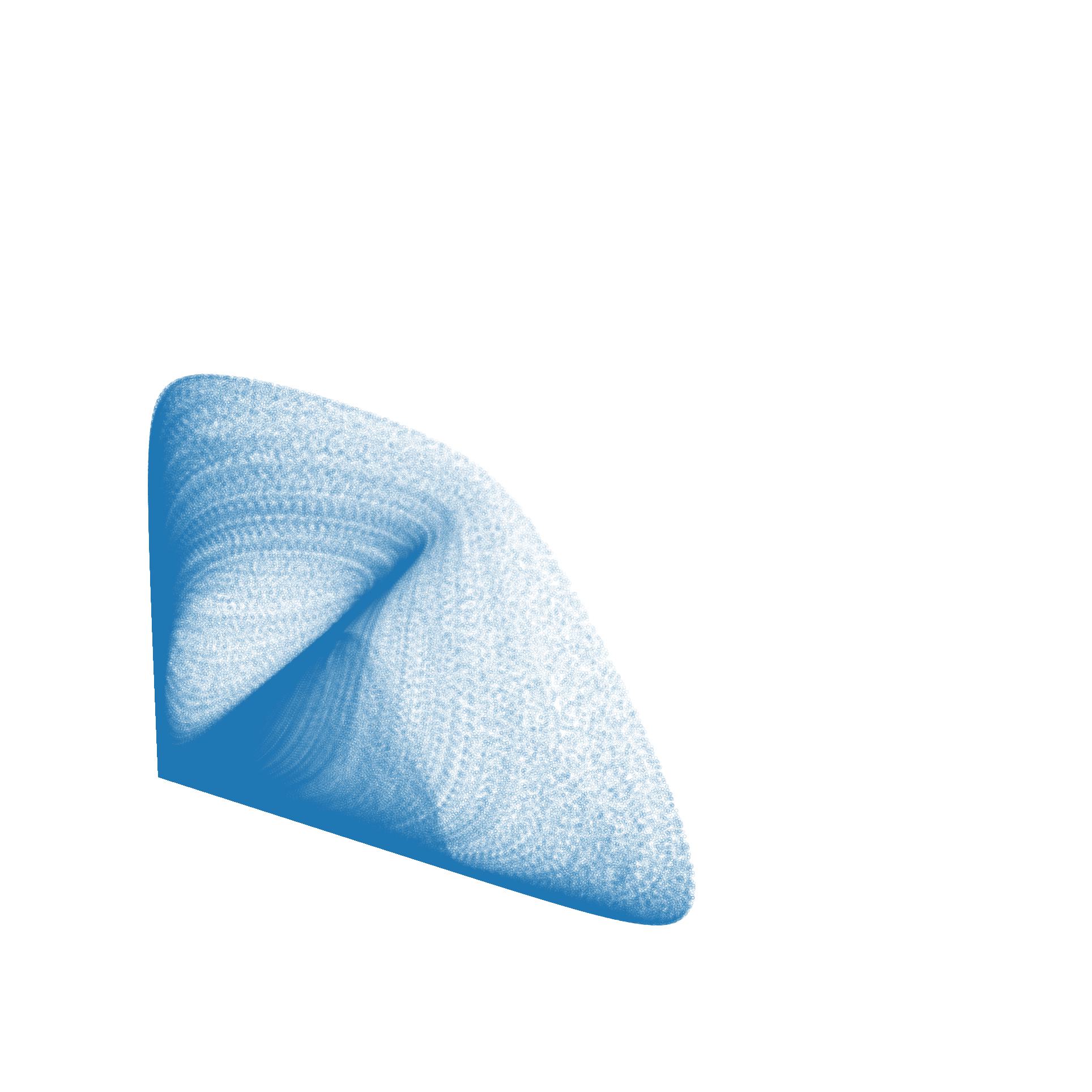}

    \caption{3D projections on the $x_1x_3x_4$ hyperplane as the energy is gradually increased.}
    \label{fig2}
\end{figure}

We note that, whenever the system possesses an elliptic equilibrium, it is the unique interior equilibrium, and requires that $k_i\neq 0$. If any of the parameters $k_i$ vanishes, then $\beta=0$, and the eigenvalues \eqref{eig2} of the Jacobian matrix $Df|{x_0}$ reduce to
\[
\lambda_{1,2}=0, \qquad \lambda_{3,4}=\pm\sqrt{\alpha} \quad.
\]
This indicates that the system does not admit bounded trajectories $\mathbf{x}(t)$ in such cases, and some species either go extinct, as in Figs.\ref{fig1k}(c),(d),  or overgrow. 
More precisely, when $\alpha<0$, the populations $x_3$ and $x_4$ remain balanced, whereas when $\alpha>0$, the equilibrium is a saddle point in the $(x_3,x_4)$--plane, with one population tending to zero and the other growing without bound.


Figure  \ref{fig2} presents 3D projections of a representative orbit around the equilibrium point $\mathbf{x}_0=-\mathbf{k}$ as Proposition 1 describes. We regard negative values for all $k_i$, hence $\alpha, \beta >0$ and $\lambda_i$ are all imaginary. As the energy increases, the topology of the orbit changes to different patterns. 
%

\section{Integrable  and non-integrable cases }
\subsection{Integrable cases \label{integrablecases}}
\begin{figure}
    \centering
    \includegraphics[width=0.5\linewidth, 
    trim=10 50 10 60, clip]{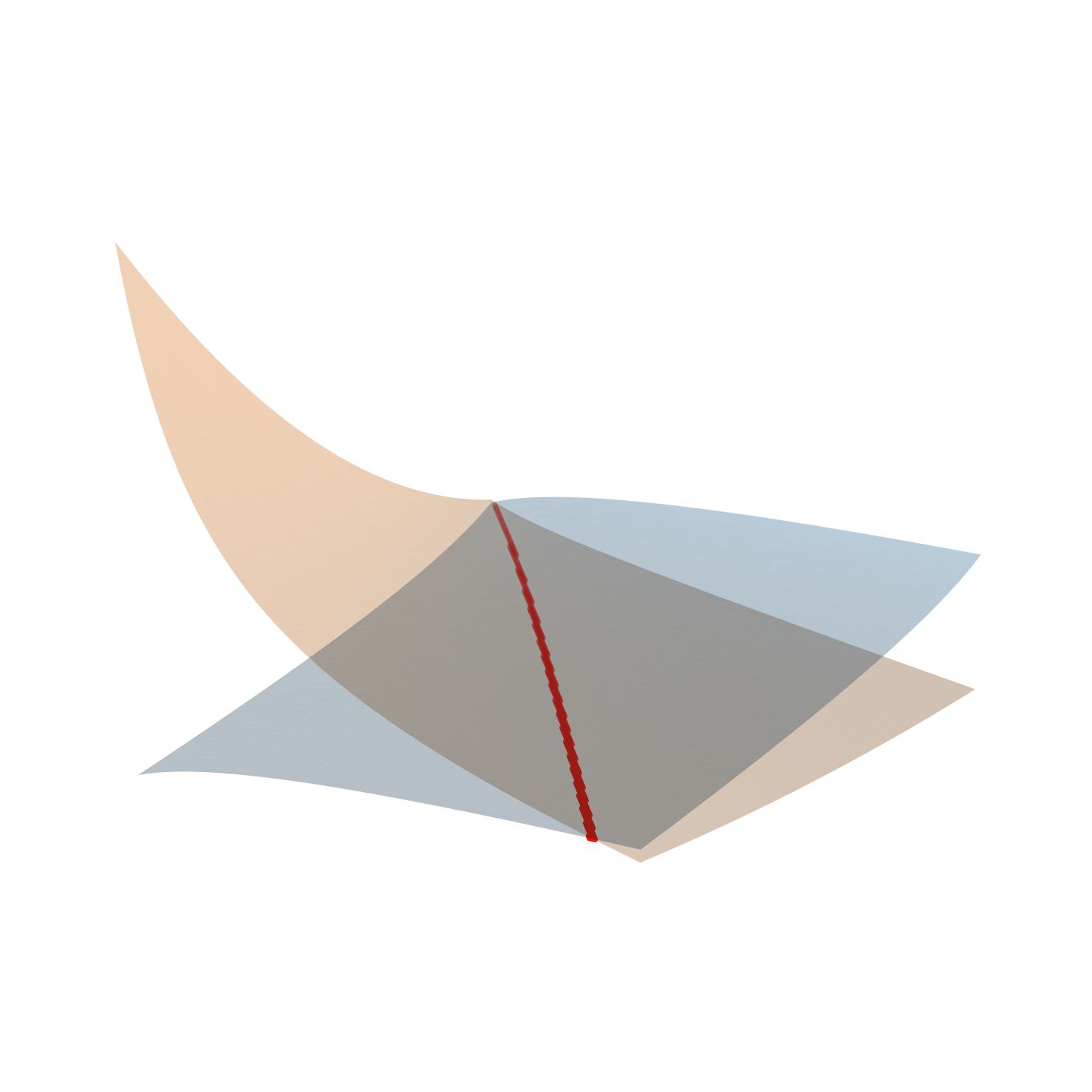}
    \caption{A schematic 3D projection of the intersection (red line) of the two integrals 
    $ I = I_1I_3 = (x_2 + x_3) (x_3 + x_4) /({x_2x_4})$ (light blue) and $H=x_1+x_2+x_3+x_4 + 3\log x_1 + 2 \log x_2 - 2 \log x_3 + 2 \log x_4$ (beige) for $x_1=0.8, x_4=1.2$, $x_2,x_3 \in [0.4, 3]$. }
    \label{fig:placeholder}
\end{figure}

\begin{table}[h]
\centering
\begin{tabular}{|c c c c|}
\hline
\textbf{Cases} & \textbf{Linear Coefficient} & \textbf{Coefficient in Hamiltonian} &\textbf{Integral} \\[0.2cm]
$A$ & $\mathbf{r}=(k_4,-k_1 + k_4,-k_1 + k_4,-k_1) $ & $   \mathbf{k}=(k_1,0,0,k_4)$ & $I_1=(x_2+x_3)\frac{1}{x_1x_4}$ \\[0.2cm]
$B_1$ & $\mathbf{r}=(k_2,-k_1,-k_1 - k_2,-k_1 - k_2) $ & $\mathbf{k}=(k_1,k_2,0,0)$ & $I_2=  (x_3+x_4) \frac{x_1}{x_2} $ \\[0.2cm]
$B_2$ & $ \mathbf{r}=(k_3 + k_4,k_3 + k_4,k_4,-k_3) $ & $   \mathbf{k}=(0,0,k_3,k_4)$  & $I_3 = (x_1+x_2) \frac{x_4}{x_3}$ \\[0.2cm]
$C_1$ &  $\mathbf{r}=(k_2,-k_1,-k_1,-k_1) $ & $   \mathbf{k}=(k_1,k_2,-k_2,k_2)$ & $I_4=I_1 I_2 - 1=(x_2+x_3+x_4) \frac{x_3}{x_2 x_4}$ \\[0.2cm]
$C_2$ & $\mathbf{r}=(k_4,k_4,k_4,-k_1) $ & $ \mathbf{k}=(k_1,-k_1,k_1,k_4)$ & $ I_5 = I_1 I_3 -1 =(x_1+x_2+x_3) \frac{x_2}{x_1 x_3}$ \\[0.2cm]
$D$ & $\mathbf{r}=(-k_1,-k_1,-k_3,-k_3) $ & $ \mathbf{k}=(k_1,-k_1,k_3,-k_3)$ & $I_6 = I_2^{k_1} \cdot I_3^{- k_3} $ \\[0.3cm]
\hline
\end{tabular}
\caption{Integrable cases of the four-dimensional Lotka--Volterra model (\ref{oursystem}). Each case consists of a two-parametric family.}
\label{table2}
\end{table} 
The integrability and solutions of the system (\ref{themodel}) for arbitrary dimension $n$ and various choices of the parameters $a_i, r_i, i=1, \ldots, n$ have been investigated in a series of papers \cite{HK2019,  KQV, KKQTV, VKMcQR2021, VKMcQR2025}. In particular, for $r_i=0$ and any $a_i \in \mathbb{R}$, the system is always Liouville integrable and superintegrable admitting $n-1$ functionally independent first integrals. For the case of $n=4$, $a_i=1, r_i=0$, these integrals can be expressed as 
\[
I_1=(x_2+x_3)\frac{1}{x_1x_4}, \quad
I_2 = (x_3+x_4) \frac{x_1}{x_2}, \quad 
I_3 = (x_1+x_2) \frac{x_4}{x_3} \, ,
\]
and the Hamiltonian simplifies to $$H_0=x_1 +x_2+ x_3 + x_4 = (I_1-1) I_2 I_3 \,.$$ 
Clearly, any functional combination of the integrals \(I_i\), of the form \(\mathcal{I}=\Phi(I_1,I_2,I_3)\), is also an integral in this case.

In the general case of arbitrary \(r_i\), the quantities \(I_1,I_2,I_3\) are not necessarily integrals. Nevertheless, for particular choices of the constants \(r_i\), with $r_1^2 + r_2^2 + r_3^2 + r_4^2 \neq 0$, certain combinations \(\mathcal{I}=\Phi(I_1,I_2,I_3)\) remain integrals of (\ref{oursystem}).
More precisely, by  evaluating the Poisson bracket of \(\mathcal{I}=\Phi(I_1,I_2,I_3)\) with the Hamiltonian (\ref{hamilton2}), we obtain  the  expression
$$
 \{ \mathcal{I}, H \} = \sum_{i=1}^4 \{ \Phi(I_1,I_2,I_3), k_i \log {x_i} \}=  
 \frac{k_3 x_2 - k_2 x_3}{x_2 + x_3}  I_1 \frac{\partial \Phi}{\partial I_1} + \frac{k_4 x_3 - k_3 x_4}{x_3 + x_4}  I_2 \frac{\partial \Phi}{\partial I_2} +
\frac{k_2 x_1 - k_1 x_2}{x_1 + x_2} I_3 \frac{\partial \Phi}{\partial I_3} \, ,
$$
from which  the following Lemma yields.  

\begin{Lemma}
 If  
\begin{equation} \label{intLem}
 \frac{k_3 x_2 - k_2 x_3}{x_2 + x_3}  I_1 \frac{\partial \Phi}{\partial I_1} + \frac{k_4 x_3 - k_3 x_4}{x_3 + x_4}  I_2 \frac{\partial \Phi}{\partial I_2} +
\frac{k_2 x_1 - k_1 x_2}{x_1 + x_2} I_3 \frac{\partial \Phi}{\partial I_3} =0\,,
\end{equation}
for a differentiable function 
$\Phi : D \subset\mathbb{R}^3 \rightarrow \mathbb{R}$, then  
$\mathcal{I}=\Phi(I_1,I_2,I_3)$ is a first integral of system (\ref{oursystem}).
\end{Lemma}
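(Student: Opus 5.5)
The plan is to show that $\{\mathcal{I},H\}$, which by the Hamiltonian structure (\ref{poisson})--(\ref{hamilton2}) equals $\dot{\mathcal{I}}$ along solutions of (\ref{oursystem}), coincides with the left-hand side of (\ref{intLem}). The hypothesis then gives $\dot{\mathcal I}=0$ on the domain $x_i>0$ where $(I_1,I_2,I_3)\in D$.

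First I would check the Hamiltonian setup. With $\{x_i,x_j\}=x_ix_j$ for $i<j$ and $a_i=1$, one has $\dot x_i=\{x_i,H\}=x_i\bigl(\sum_{j>i}x_j-\sum_{j<i}x_j+\sum_{j\ne i}\epsilon_{ij}k_j\bigr)$, where $\epsilon_{ij}=+1$ for $j>i$ and $-1$ for $j<i$. Matching the constant terms with $r_i$ amounts to inverting (\ref{k1to4}); this is the routine check that the paper already states. Hence for any differentiable $F$ we have $\dot F=\{F,H\}$.

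Next I would split $H=H_0+\sum_i k_i\log x_i$ with $H_0=\sum_i x_i$. The three functions $I_1,I_2,I_3$ are integrals of the case $r_i=0$, i.e. $\{I_m,H_0\}=0$. I would verify this directly: for example, $\{I_1,H_0\}$ vanishes by computing $\{\log I_1,H_0\}$ with the log-canonical brackets, and similarly for $I_2$ and $I_3$. By the chain rule, $\{\Phi(I_1,I_2,I_3),H_0\}=\sum_m \partial_m\Phi\,\{I_m,H_0\}=0$. Consequently $\{\mathcal I,H\}=\sum_m \partial_m\Phi\,\{I_m,K\}$, where $K=\sum_i k_i\log x_i$.

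The remaining step, and the main one, is to compute $\{I_m,K\}=I_m\{\log I_m,K\}$. Since $\{\log x_i,\log x_j\}=\epsilon_{ij}$ is constant, we get $\{\log x_i,K\}=\sum_{j\neq i}\epsilon_{ij}k_j=:c_i$, which is constant. Also $\{x_2+x_3,K\}=x_2c_2+x_3c_3$. Therefore
\[
\{\log I_1,K\}=\frac{c_2x_2+c_3x_3}{x_2+x_3}-c_1-c_4 .
\]
Writing $c_i$ in terms of the $k$'s, for instance $c_1=k_2+k_3+k_4$, $c_4=-(k_1+k_2+k_3)$, $c_2=-k_1+k_3+k_4$ and $c_3=-k_1-k_2+k_4$, we get
\[
c_2-c_1-c_4=k_3,\qquad c_3-c_1-c_4=-k_2 .
\]
This yields exactly $\dfrac{k_3x_2-k_2x_3}{x_2+x_3}$. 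The analogous computations for $I_2=(x_3+x_4)x_1/x_2$ and $I_3=(x_1+x_2)x_4/x_3$ should give the other two coefficients in (\ref{intLem}). I expect the only real obstacle to be sign bookkeeping in these linear combinations; I would carry out each one explicitly.

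Summing the three terms reproduces the left side of (\ref{intLem}). Under the hypothesis this gives $\dot{\mathcal I}=\{\mathcal I,H\}=0$, so $\mathcal I$ is a first integral.
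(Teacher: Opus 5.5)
Your proposal is correct and follows essentially the same route as the paper. The paper also computes $\{\Phi(I_1,I_2,I_3),H\}$ and drops the $H_0=\sum_i x_i$ part, because $I_1,I_2,I_3$ are integrals of the $r_i=0$ system; the logarithmic terms then give the left-hand side of (\ref{intLem}), so its vanishing yields $\dot{\mathcal I}=0$. Your bookkeeping with $c_i=\{\log x_i,K\}=r_i$ checks out for all three coefficients and supplies the computation the paper leaves implicit.
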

For arbitrary $k_i$, the generic solution of equation (\ref{intLem}) is just the constant function $\Phi(I_1,I_2,I_3)=c$. However, particular choices of $k_i$ lead to non-trivial integrals. For example, when $k_3=k_4=0$,  equation (\ref{intLem})  reduces to 
$$ \frac{  - k_2 x_3}{x_2 + x_3}  I_1 \frac{\partial \Phi}{\partial I_1} + 
\frac{k_2 x_1 - k_1 x_2}{x_1 + x_2} I_3 \frac{\partial \Phi}{\partial I_3} =0 \, ,$$
which admits the simple solution $\Phi(I_1,I_2,I_3) = I_2$. Similarly, for $k_2=-k_1,k_3 = k_1$ equation (\ref{intLem}) simplifies to   
$$ k_1 I_1 \frac{\partial \Phi}{\partial I_1} + \frac{k_4 x_3 - k_1 x_4}{x_3 + x_4}  I_2 \frac{\partial \Phi}{\partial I_2} - k_1 I_3 \frac{\partial \Phi}{\partial I_3} =0\, , $$
and admits the solution  $\Phi(I_1,I_2,I_3) = I_1 I_3$.
In Table \ref{table2}, we present  integrable cases that follow from Lemma 1 for various choices of the parameters $k_i$. These cases have also been derived in \cite{VKMcQR2021,VKMcQR2025} using the theory of Darboux polynomials. 
All the intagrable cases in Table \ref{table2} are two-parametric with two independent parameters $k_i, k_j$, $i \ne j$.
We note that the integrable cases \(B\) and \(C\) occur in pairs due to the permutation symmetry of the system
$(x_1,x_2,x_3,x_4) \mapsto (-x_4, -x_3, -x_2, -x_1) $ and $(k_1,k_2,k_3,k_4)\mapsto(-k_4,-k_3,-k_2,-k_1)$.
Under this symmetry transformation, case \(B_1\), with associated integral \(I_2\), corresponds to case \(B_2\), with integral \(I_3\). Likewise, case \(C_1\), with \(I_4 = I_1 I_2 -1\), corresponds to case \(C_2\), with \(I_5 = I_1 I_3 -1\), while the integrals \(I_1\) and \(I_6\), associated with cases \(A\) and \(D\), respectively, remain invariant.

We remark that Lemma 1 may not exhaust all possible integrable cases. Instead, it identifies those cases admitting integrals of system (\ref{oursystem}) that are also integrals of the corresponding system with all $r_i=0$.  
The existence of additional integrable cases that do not possess this property, and therefore lie outside Lemma 1's framework, is left for future investigation.

It is worth emphasizing that all identified $n=4$ integrable families inherently fail to satisfy the necessary criteria for an interior elliptic equilibrium set by Proposition 1 ($\alpha>0$ and $\beta>0$).  
In these cases, there are  two possible asymptotic behaviors, either exponential divergence or collapse onto the extinction planes for some of the populations $x_i$. As a consequence, exact integrability within this hierarchical predator-prey network leads to ecological collapse. 

In the following subsection we focus on cases that exhibit chaotic behavior.

\begin{figure}
    \centering
            \includegraphics[width=0.4\linewidth]{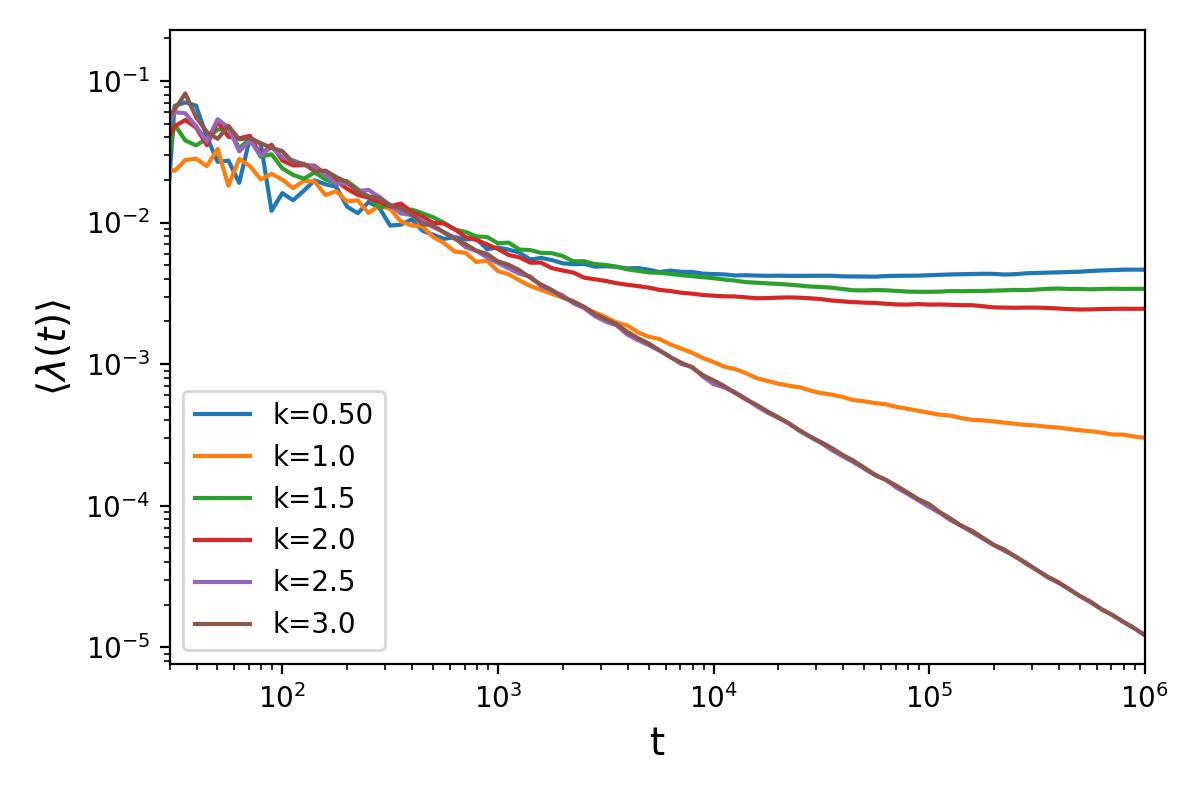}
        \includegraphics[width=0.4\linewidth]{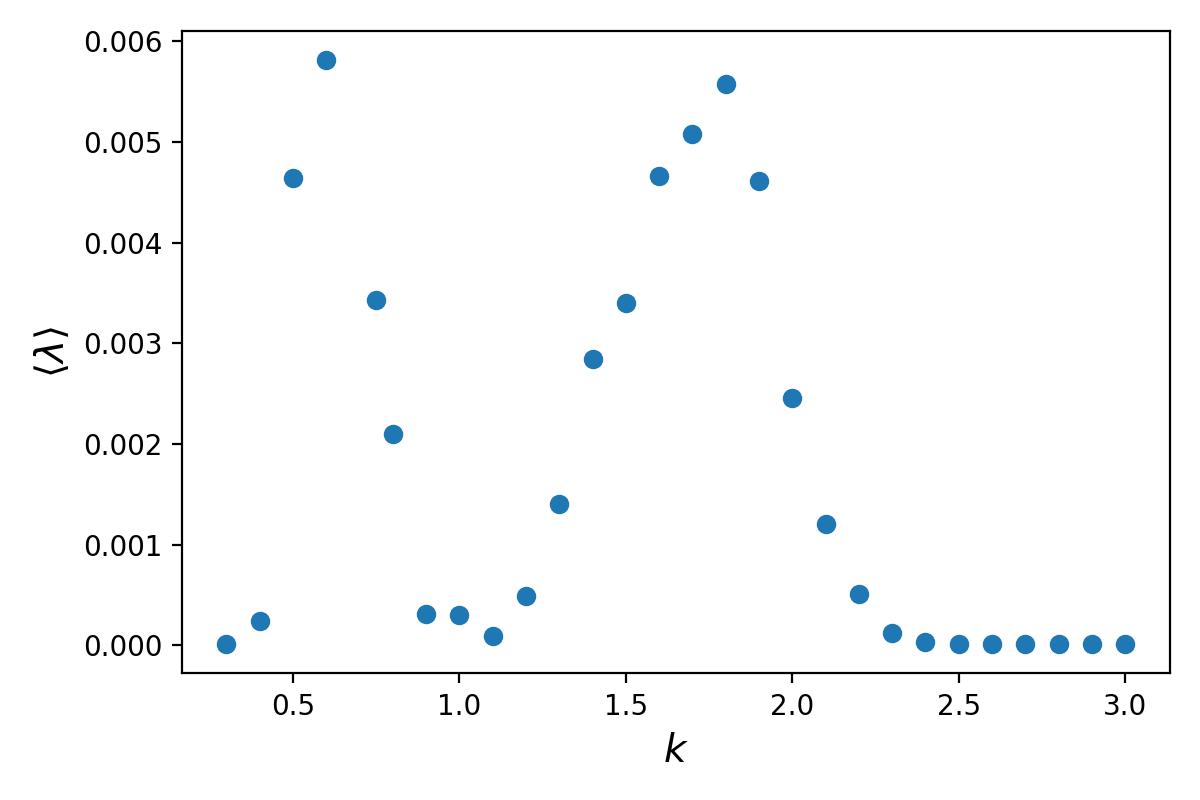}
        \caption{ Left: The evolution of the averaged finite-time Lyapunov exponent $< \lambda (t) > = \sum_{m=1}^{100} \lambda_m (t) /100$ for a selected number of $k$ values. 
        Right: The converged values of the averaged maximum Lyapunov exponents $< \lambda > $ versus $k$.   
        \label{LyapExp}}
\end{figure}

\subsection{Non-integrable cases\label{chaoscases}}
\begin{figure}
    \centering
        \includegraphics[width=0.4\linewidth]{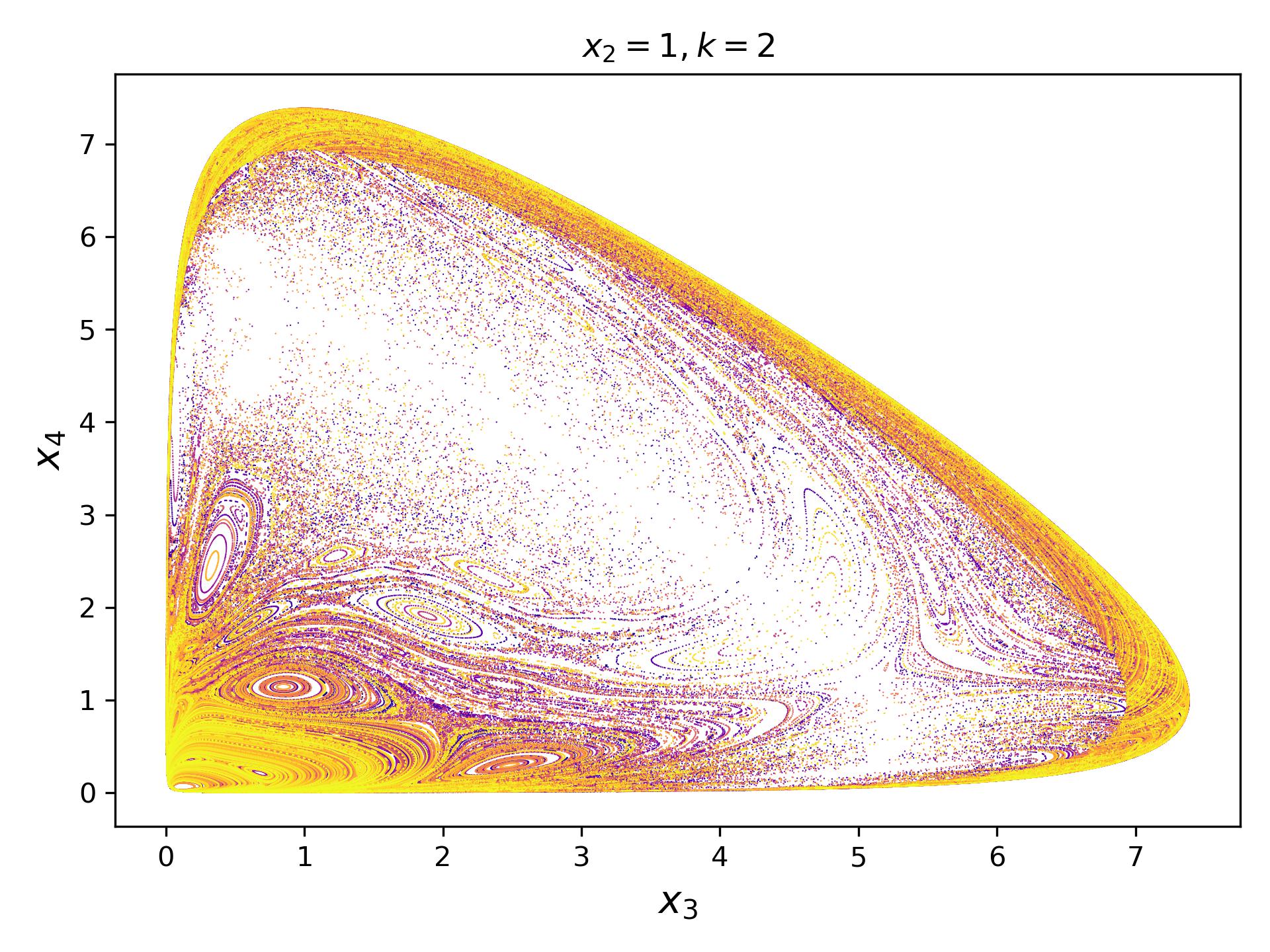}
        \includegraphics[width=0.4\linewidth]{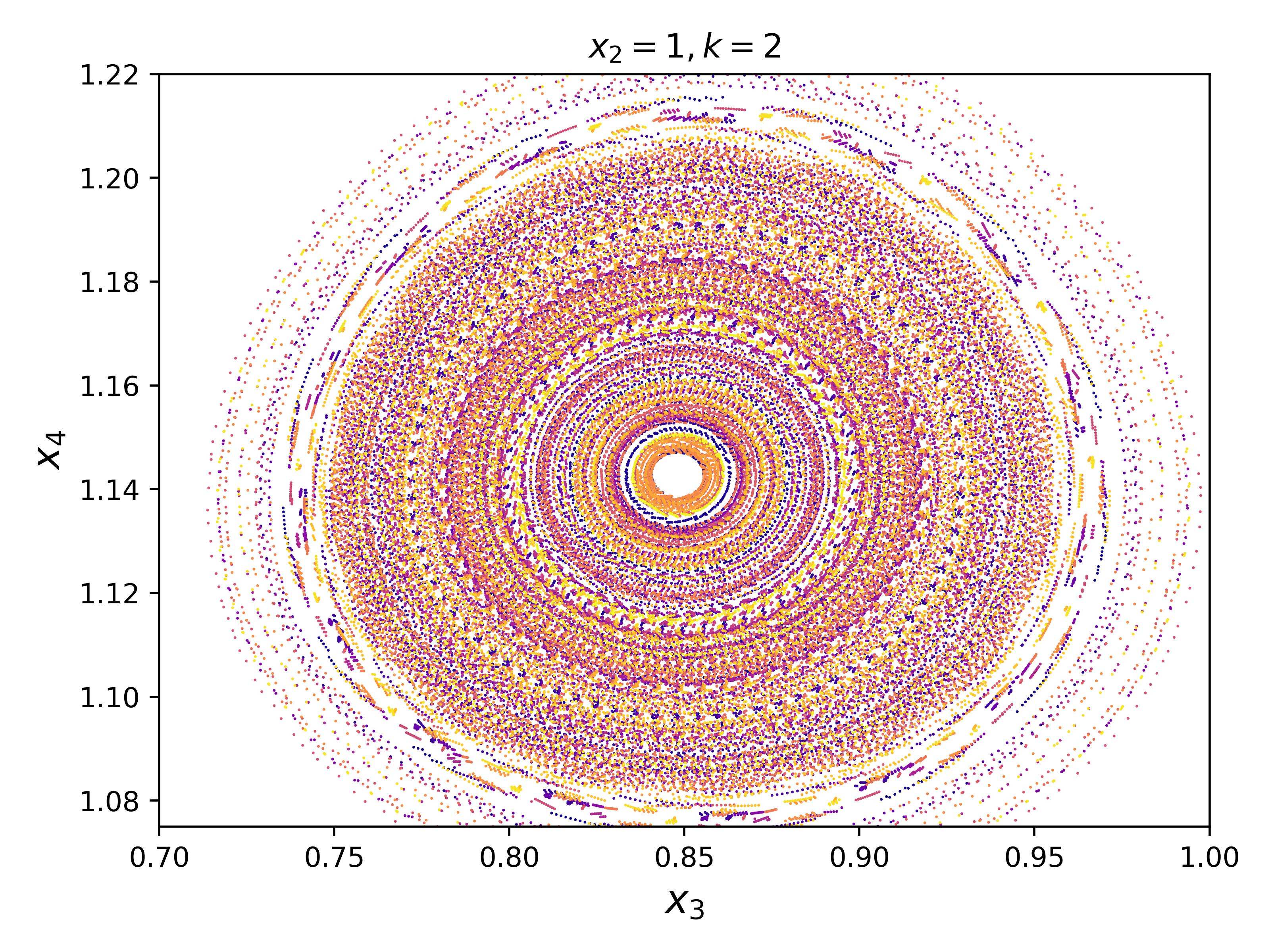}
        \includegraphics[width=0.4\linewidth]{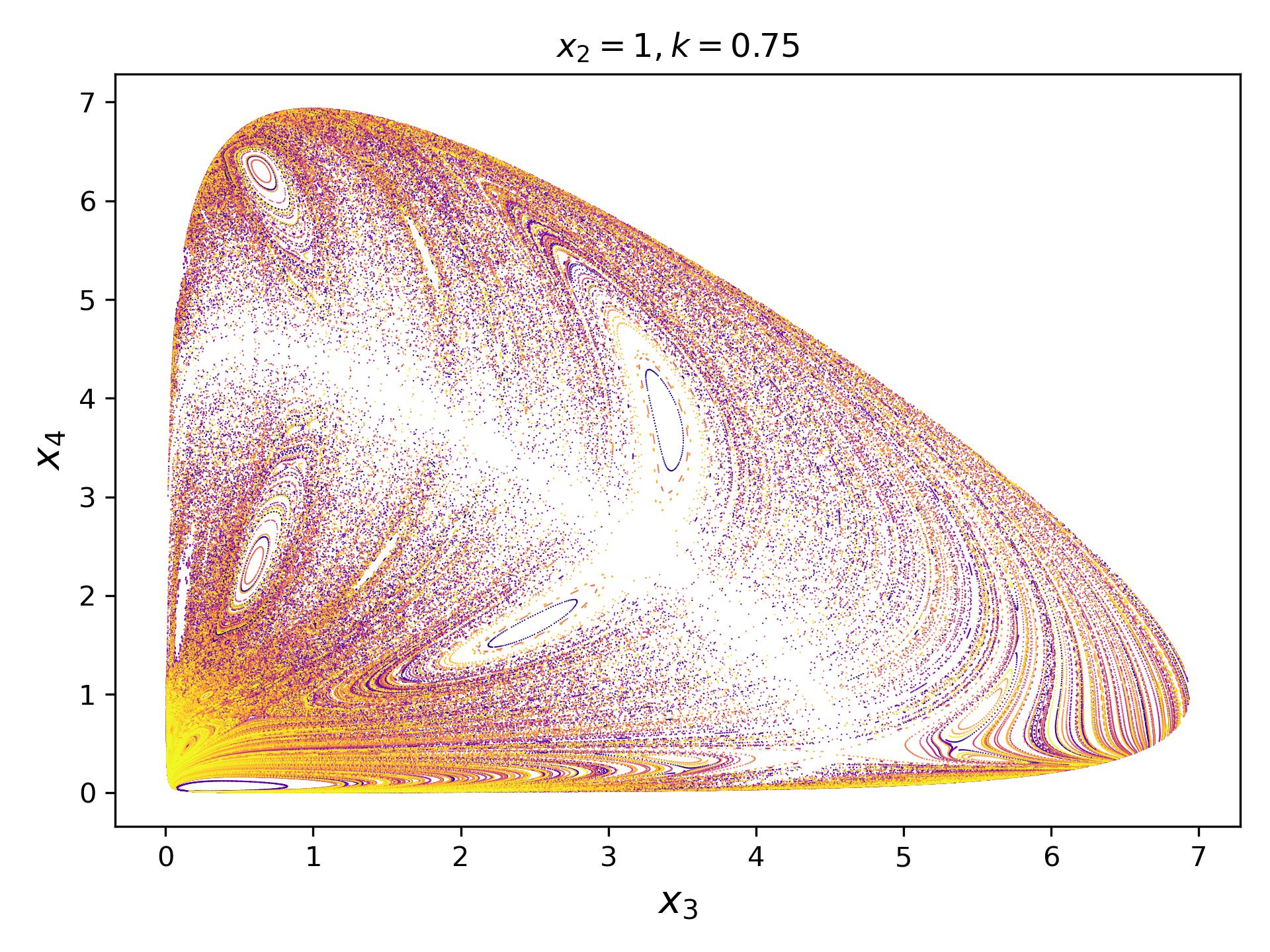}
        \includegraphics[width=0.4\linewidth]{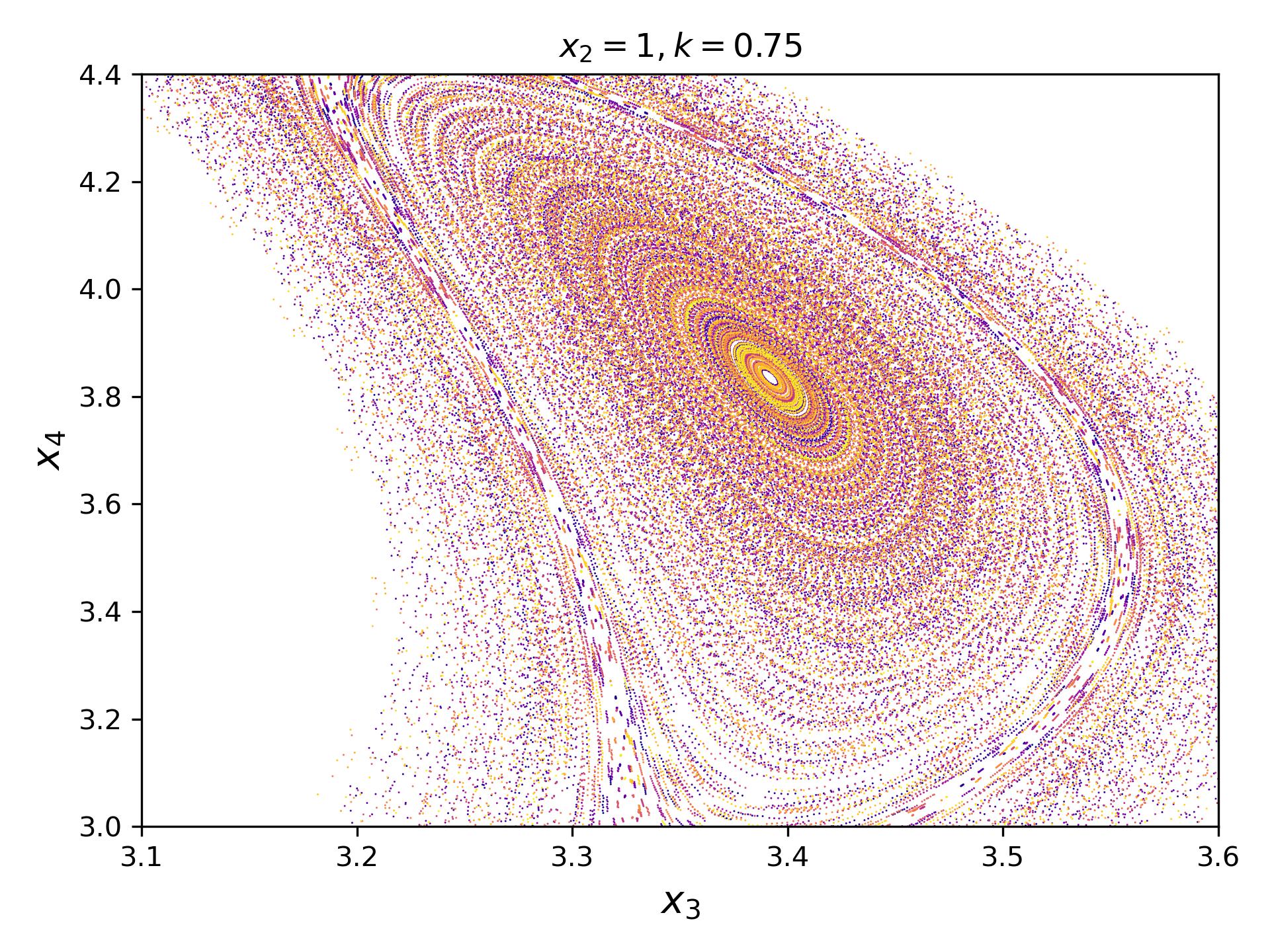}
        \caption{Poincar{\'e} surfaces of section of $x_2=1$, $x_1>1$ displaying $x_3$, $x_4$ variables. 
        Upper left: the section for $\mathbf{k} = (-2, -2, -1 ,-1)$, upper right: a closed look around $(0.85, 1.15)$ showing organized behavior,
        lower left: the section for $\mathbf{k} = (-0.75, -0.75, -1 ,-1)$, lower right: a closer look around $(3.4, 3.9)$. \label{poincare1}   }
\end{figure}

\begin{figure}
    \centering
    \includegraphics[width=0.4\linewidth]{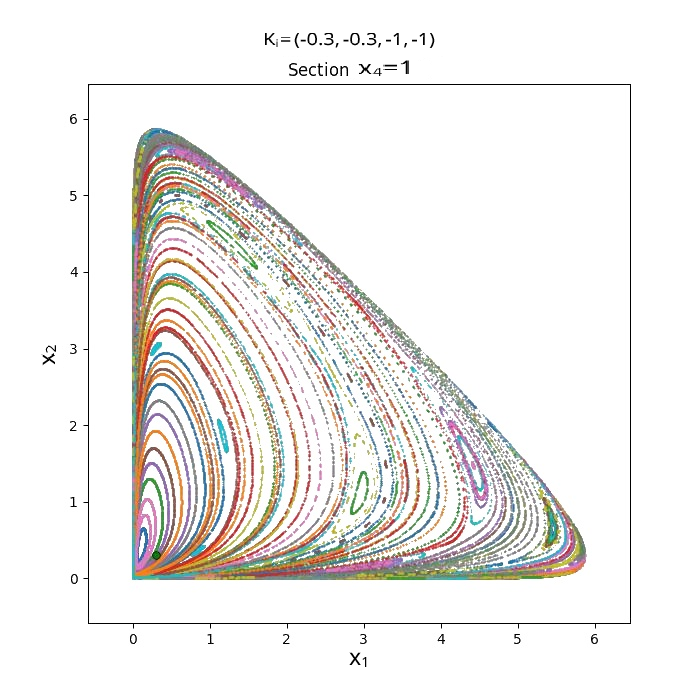}\hspace{-0.04\linewidth}
    \includegraphics[width=0.4\linewidth]{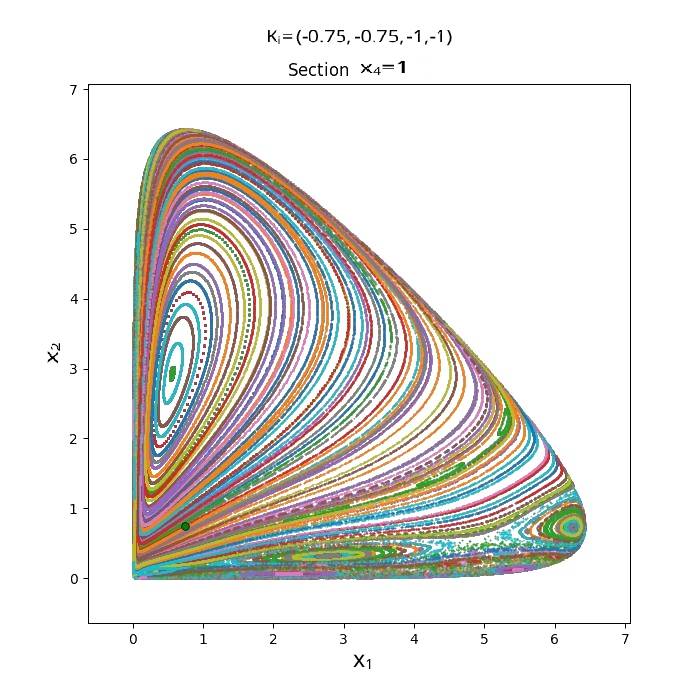}\\
    \includegraphics[width=0.4\linewidth]{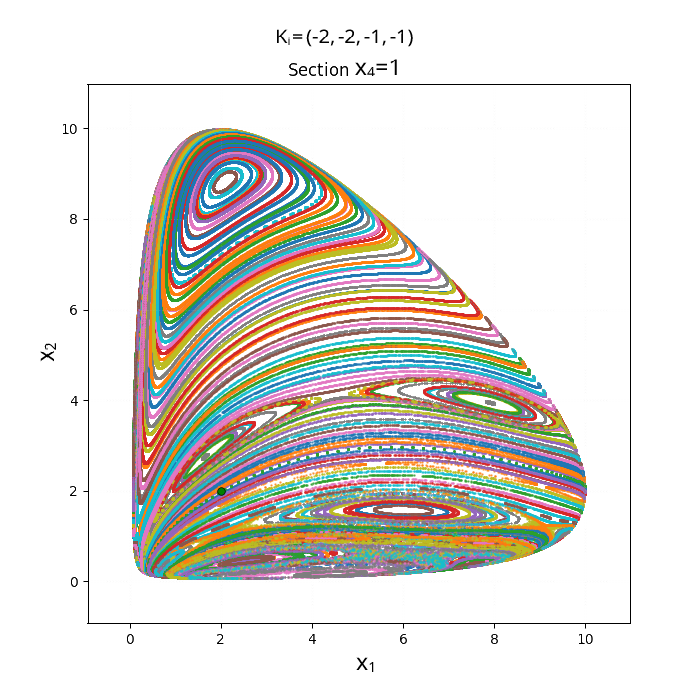}\hspace{-0.04\linewidth}
    \includegraphics[width=0.4\linewidth]{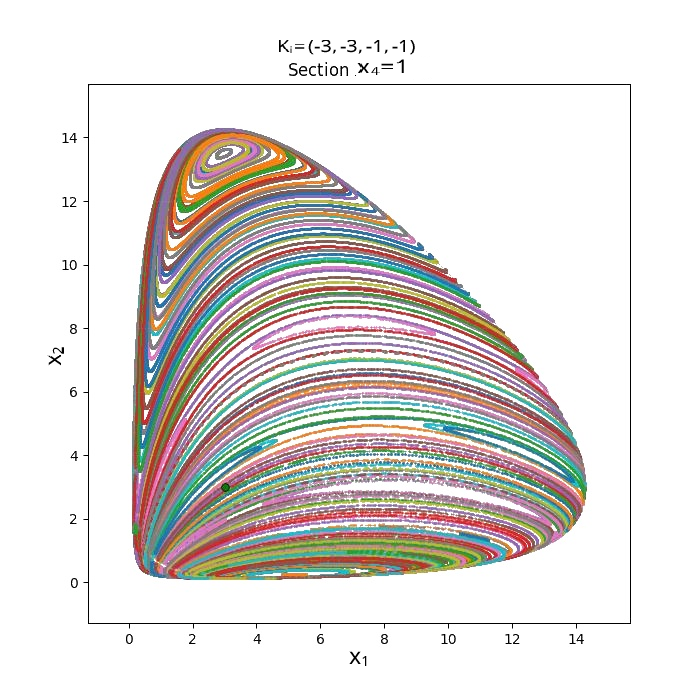}
    \caption{   Poincar{\'e} surfaces of section defined by $x_4=1$, $x_3>1$ displaying $x_1,x_2$ variables. 
    Upper left: $\mathbf{k} = (-0.3, -0.3, -1 ,-1)$, upper right: $\mathbf{k} = (-0.75, -0.75, -1 ,-1)$.
        Lower left: $\mathbf{k} = (-2, -2, -1 ,-1)$, lower right: $\mathbf{k} = (-3, -3, -1 ,-1)$. $E=8$ in all four panels.  Different colors highlight the trajectories obtained from distinct initial conditions.
    \label{last}}
\end{figure}

Let us first study cases that display chaotic behavior through two different methods: the calculation of the maximum Lyapunov exponent and the Poincar{\'e} surfaces of section.  To make the comparison more consistent, we fix the total energy of the system to $E=8$, and vary only the values of the  $\mathbf{k}$ parameters for different random initial conditions. In particular, we start with the cases that produce elliptic equilibrium points, while $x_i>0$, yielding that $k_i<0$. To simplify the calculations, we set $\mathbf{k} = (-k, -k, -1 ,-1)$, and vary the values of a single arbitrary parameter $k$, starting from a positive value and approaching zero (integrable case).

To calculate the maximum Lyapunov exponent $\lambda$, we numerically solve the system's variational equations 
alongside those of the system (\ref{oursystem}) (reference orbit). Since the model is low-dimensional and its fixed energy $E=8$ is not high enough, there is a plethora of cases yielding either too small or zero Lyapunov exponents. 
To tackle this issue, we consider $m=100$ random initializations for each $k$ value and evaluate the average value $<\lambda>$ over the $100$ converged maximum Lyapunov exponents. 
Hence, $<\lambda>$ indicates also a percentage of chaotic and non-chaotic orbits in phase space.

In Fig.\ref{LyapExp}(a) we display the evolution of the finite-time, averaged over 100 realizations maximum Lyapunov exponents $< \lambda(t) >$  for a few selected $k$ values in $\mathbf{k} = (-k, -k, -1 ,-1)$. We thus observe that for the values $k=0.5, 1.5, 2$, $< \lambda (t)> $ has converged to a positive constant. Instead, for $k=2.5, 3$
 $< \lambda (t) > $ continues to decay like $1/t$ and chaos remains undetectable within the time-window up to $10^6$,  signifying more organized behavior at higher $k$ values. The right panel of  Fig.\ref{LyapExp}  shows the values of $< \lambda >$ at time $10^6$ for $k$ ranging from $0.1$ to $3$. It is thus observed that for $k$ values within the interval $[0.3, 2.4]$ these have converged to a positive constant, whereas for values within $[2.5, 3]$ the finite-time Lyapunov exponent still decays as $1/t$.  
  An interesting remark, at this point, is that the maximum Lyapunov exponents vary non-monotonically with $k$, showing two local maxima around $k=0.6$ and $1.8$, and minima around $k=0.3, 1.1$. At higher $k$ values, where $k \ge 2.5$, chaos is undetectable.

Fig. \ref{poincare1} presents the two Poincar{\'e} surfaces of section defined by $x_2=1$, $x_1>1$ and plotted in the  $(x_3,x_4)$ plane for $k=2$ and $k=0.75$.  Each Poincar{\'e} surface has been evaluated by considering 200 random initial conditions on the corresponding energy surface, with the different colors representing the trajectories obtained from these distinct initial conditions.
Both sections exhibit coexistence of chaos and organized behavior, which is in accordance to the averaged Lyapunov exponent values of Fig. \ref{LyapExp}. It is worth noting that  Poincar{\'e} surfaces of section $x_2=1$, $x_1>1$  show no points along a curve in the plot, which accounts for the white region in the middle of the plot. This phenomenon arises from the structure of the manifold defined by the preimage of the Hamiltonian function (\ref{hamilton}) $H^{-1}(8)$, which does not admit trajectories in this region.  Instead, the  Poincar{\'e} surfaces of section $x_4=1, x_3>1$, appearing in Fig.\ref{last}, show a clearer representation without the manifold exclusion. However, the sections $x_2=1, x_1>1$ in Fig.\ref{poincare1} display a richer phase space structure, with numerous stability islands and denser chaotic regions.

The right panels of Fig. \ref{poincare1} contain a focused regular region 
of the Poincar{\'e} surfaces $x_2=1, x_1>1$ (left panels) and around a stable periodic orbit (central point in the plot). Such orbits wind around tori in the 4-dimensional phase space, such as we see in Fig.\ref{fig2} (a), which is generated by a single trajectory.

Finally, in  Fig.\ref{last} we compare four Poincar{\'e} surfaces of section defined by $x_4=1$, $x_3>1$ and plotted in the  $(x_1,x_2)$ plane for $\mathbf{k} = (-k, -k, -1 ,-1)$, where (a) $k=0.3$, (b) $k=0.75$, (c) $k=2$ and (d) $k=3$.
The choice of $x_4=1$ intersections has the advantage of yielding clearer images, with no excluded regions, unlike the $x_2=1$ case. On the other hand, chaotic regions appear to be  less dense and harder to identify. In Fig.\ref{last}(a) for $k=0.3$ there is an additional difficulty: Low $k$ values produce the `butterfly'  orbits of \ref{fig1k}(b), where trajectories exhibit excursions between two planes and sparsely intersect the  Poincar{\'e} surface of section. Comparing with the rest panels, we observe that the separatrices between consecutive stability islands in the lower parts of Figs.\ref{last}(b),(c) are connected with chaotic domains
containing `scattered' clouds of points. Such regions are quite subtle in Figs.\ref{last}(d), where positive Lyapunov exponents have not been identified.  

\subsection{Non-sustainable solutions \label{newcases}}
 \begin{figure}
    \centering
        \includegraphics[width=0.4\linewidth]{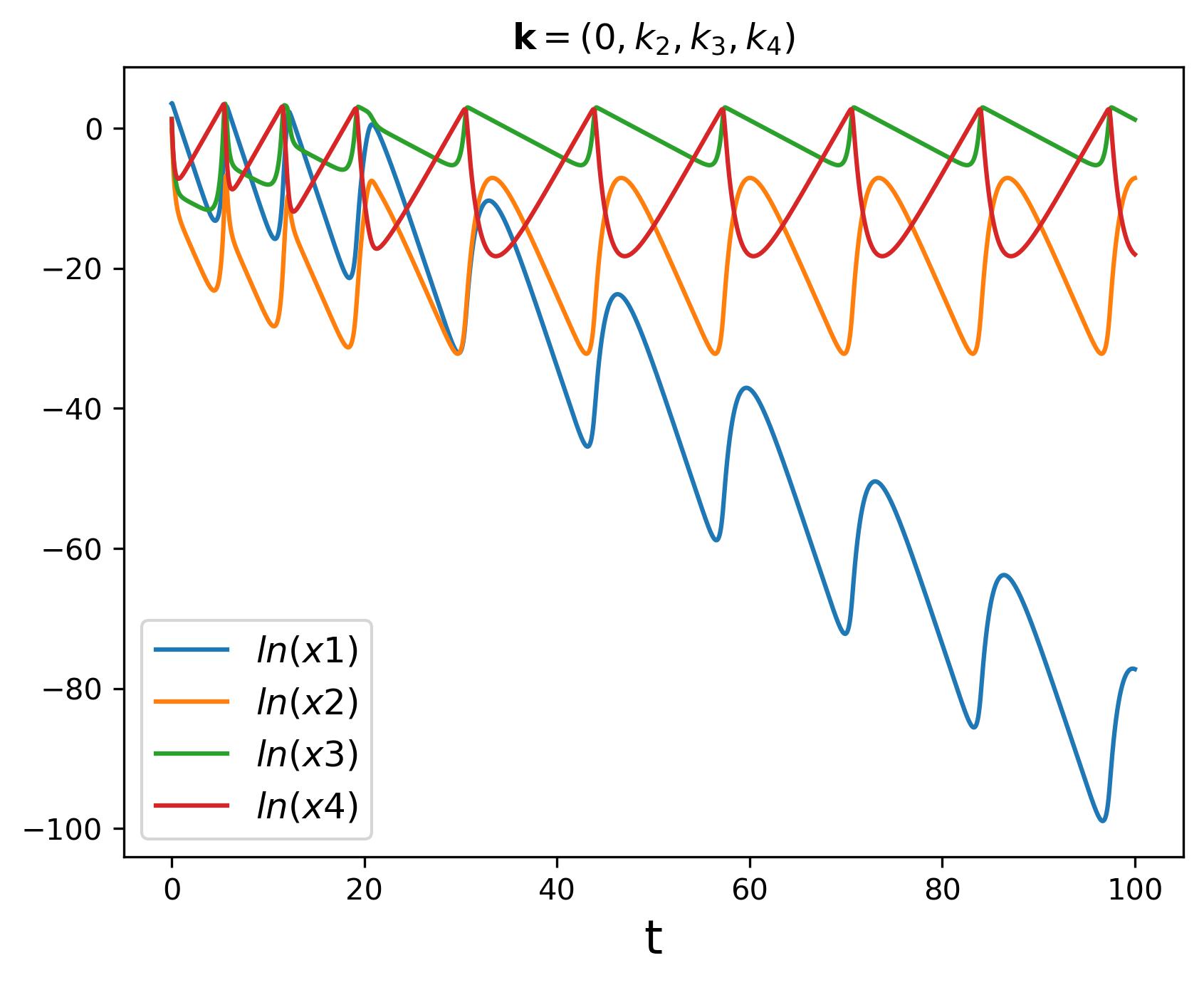}
        \includegraphics[width=0.4\linewidth]{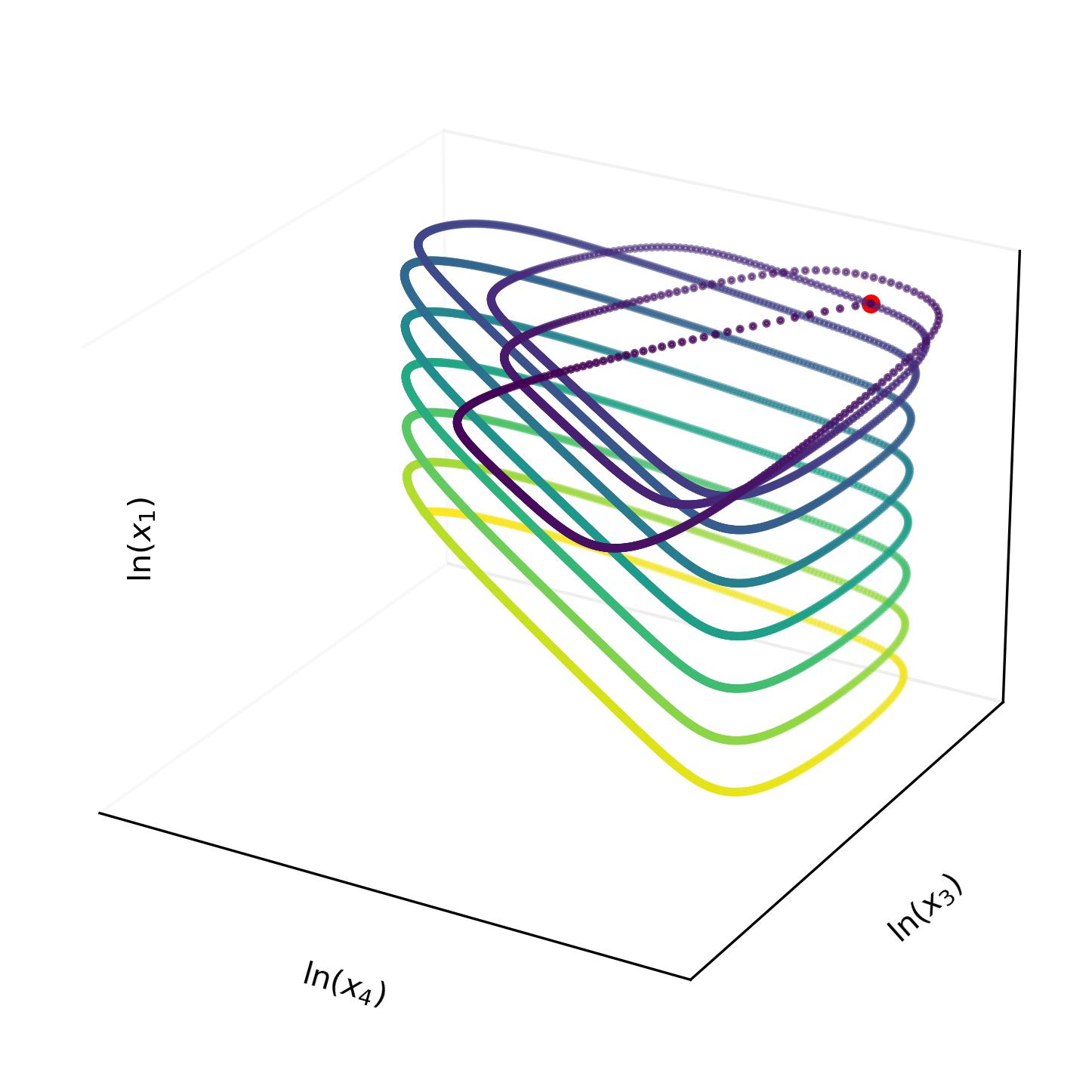}\\
        \includegraphics[width=0.4\linewidth]{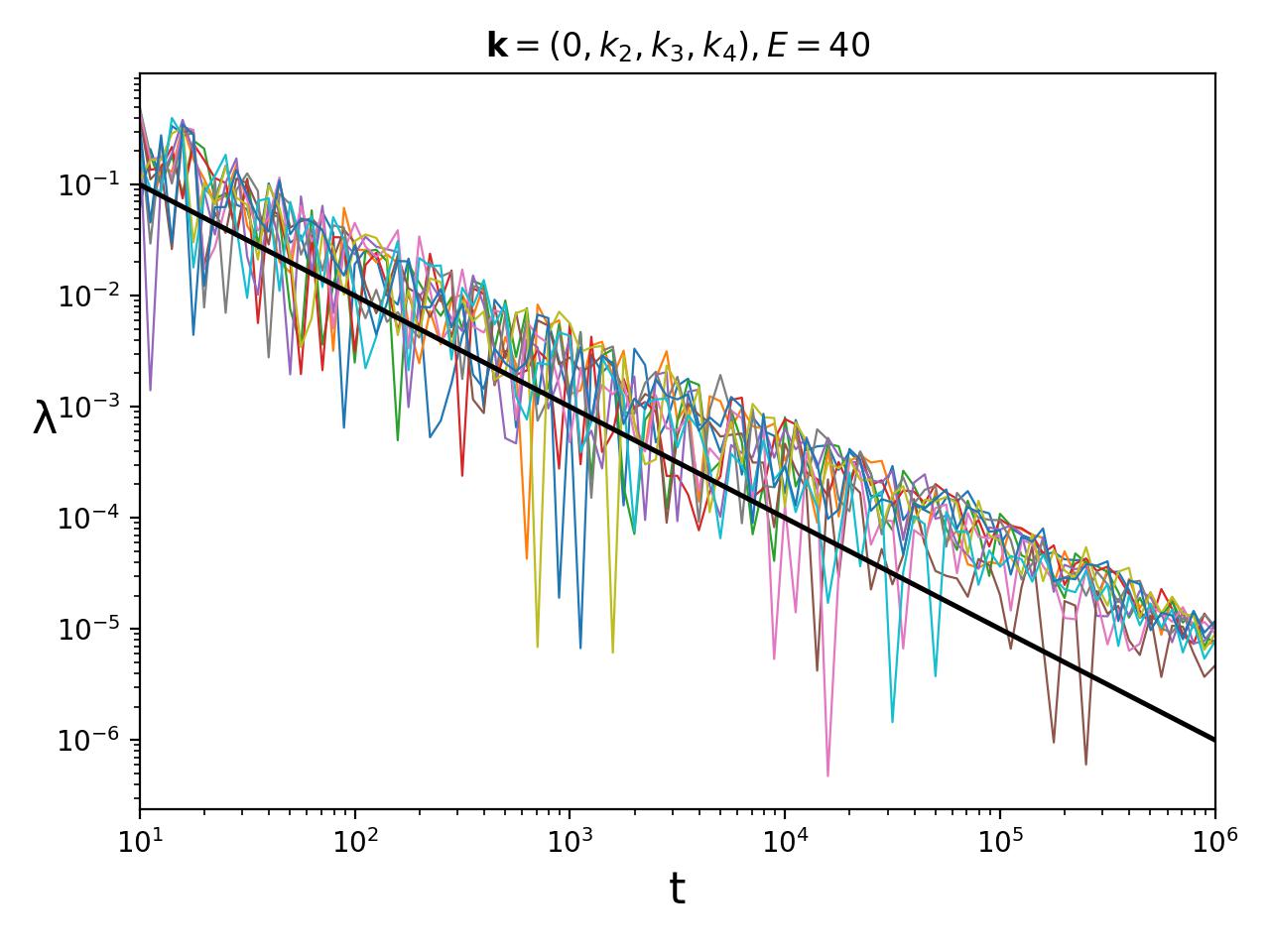}
        \includegraphics[width=0.4\linewidth]{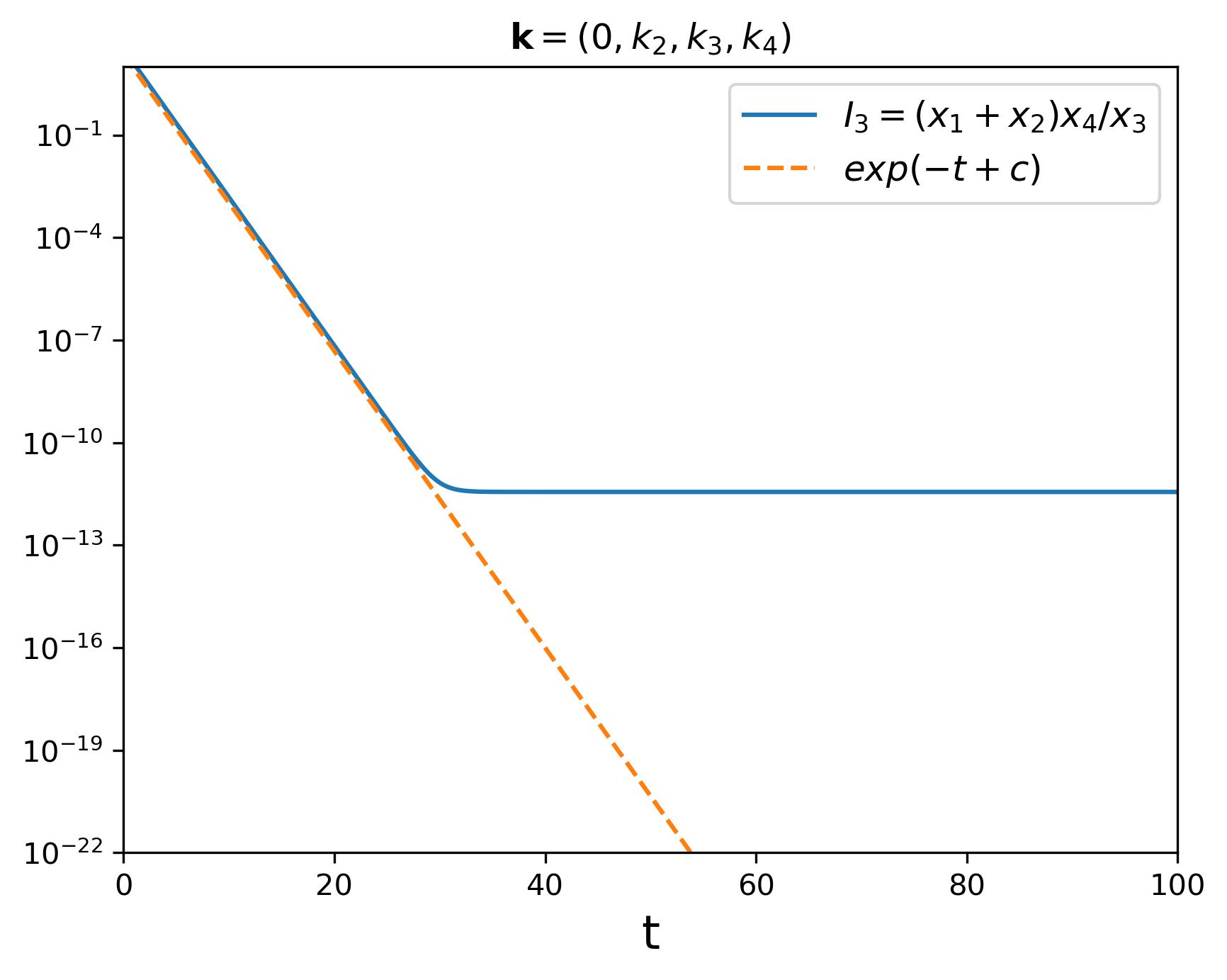}
        \caption{ The system (\ref{oursystem}) with 
        $\mathbf{k}= (0, -1, -\sqrt{2}, -\sqrt{3})$ at energy $E=40$ and random initial conditions. Upper left: Evolution of the variables $\log{ x_i}(t)$. 
        Upper right: The 3D projection of this trajectory in  $(\log{x_1}, \log{x_3}, \log{x_4})$ space. Lower left: 
        Several finite-time maximum Lyapunov exponents for random initial conditions decay as $1/t$. Lower right:
        The evolution of $I_3(t)$ along the dynamics of this three-parametric case decays exponentially with time and converges to a constant around $t=30$. 
        \label{newcase}   }
\end{figure}

All cases not satisfying the two conditions stated in  Proposition 1 for the $\mathbf{k}$ values, i.e. $\alpha, \beta>0$, yield non-sustainable solutions. This is due the stability analysis of equilibrium points, as explained in Section \ref{stabanal}.    
The main open question now concerns the remaining cases, which do not possess an elliptic equilibrium point and have not been identified as any of the known integrable cases discussed in  Section \ref{integrablecases}.
In this Section, we discuss a representative three-parametric case, namely  $\mathbf{k}=(0,k_2,k_3,k_4)$, where the system does not possess an elliptic equilibrium point. Nevertheless, for suitable choices of the parameters $k_i$, some of the populations survive at all times (partially around elliptic points). 

Let us now consider the case with $\mathbf{k}= (0, -1, -\sqrt{2}, -\sqrt{3})$ at high energy, $E=40$. The numerical solution of the system under random initial conditions is displayed in Fig.\ref{newcase}(a), where the evolution of three populations $x_2(t),x_3(t),x_4(t)$ remains bounded by stabilizing to a periodic pattern after times $t>30$, while $x_1(t)$ decays to zero at an exponential rate. The 3D projection of this orbit in $(\ln{x_1}, \ln{x_3}, \ln{x_4})$ space winds around the $\ln{x_1}$--axis in an irregular pattern,  during the first few loops.  It then converges to a periodic structure in $(\ln{x_3}, \ln{x_4})$ plane, while linearly decaying in time with respect to $\ln x_1 (t)$. The initial condition is indicated by the red bullet point, and the arrow of time by the color transition from dark blue to yellow.

Let us examine now the Lyapunov exponents of several random initial conditions at energy $E=40$, to determine whether the system can exhibit chaotic behavior. As Fig.\ref{newcase}(c) shows, the  finite-time Lyapunov functions decay as $1/t$, indicating regular rather than chaotic dynamics. Since three-parametric integrable cases have not been identified, we tested the evolution of the existing integrals of motion listed  in Table \ref{table2} along the orbit of Fig.\ref{newcase}(a) and (b). 
 It is worth noting that most  integrals in Table \ref{table2} oscillate and fluctuate, while some either decay to zero or increase exponentially. There was a single suitable integral, namely the integral $I_3$ associated with the integrable two-parametric case $\mathbf{k}=(0,0,k_3,k_4)$, which  decays as $\exp{(-t+3.513)}$ up to times approximately equal to $t=30$, a time interval that coincides with the `irregular behavior' of the orbit. It then stabilizes to a constant value with very high accuracy (Fig. \ref{newcase}(d)).    

From the above example concerning the $\mathbf{k}=(0,k_2,k_3,k_4)$ three-parametric case, we understand that cases falling outside the realm of Proposition 1 are either known integrable ones, or cases that have not been identified as integrable, yet nevertheless `converge to integrable ones' very fast. In practice, this means that the system with $\mathbf{k}=(0,k_2,k_3,k_4)$, after a short transient time, behaves as if $\mathbf{k}=(0,0,k_3,k_4)$.

\section{Conclusions}
In this paper, we examined the 4-dimensional generalization of Lotka--Volterra models with linear growth rates and constant all-to-all interactions. Our primary focus was on the conditions leading to the long-term survival of all four populations, as well as identifying integrable and non-integrable cases. By investigating the parameter values that give rise to integrable behavior, we found that all integrable cases yield two-parametric families of four-dimensional Lotka--Volterra systems. We then extended our study to three-parametric families, which do not fall into the class of known integrable ones, and found that those without elliptic equilibrium points do not exhibit chaotic behavior. Instead, they converge rapidly to known integrable two-parametric cases.

An interesting conclusion is that the integrable and organized cases do not possess bounded solutions or a balanced coexistence of all four populations. The only sustainable solutions were found in systems with elliptic equilibrium points, which do indeed display chaotic behavior.
Our analysis reveals that ecological resilience and long-term multi-species coexistence are connected with the breakdown of exact integrability. Sustainable, bounded solutions emerge exclusively around the unique elliptic point $x_0=-\mathbf{k}$. In these parameter regimes, the intricate interplay between linear and non-linear components of the Hamiltonian gives rise to chaotic dynamics, unambiguously characterized by positive maximal Lyapunov exponents ($\lambda >0$) and the complex topological structures observed in the Poincar{\'e} surfaces of section. 

This leads to a profound and counterintuitive ecological paradigm: Hamiltonian chaos is not a destabilizing force. Rather, constrained within the energy manifold $H^{-1} (E)$, chaotic behavior provides the essential topological mixing required to steer the system away from the invariant extinction boundaries. Ultimately, while mathematically elegant integrable solutions do describe fragile, doomed ecosystems, Hamiltonian chaos serves as the fundamental dynamical framework for the study of a robust, self-regulating, and sustainably coexisting ecological network.

\section{Acknowledgments}

HC is supported by the EPSRC New Investigator Award UKRI3307 {\it Ergodicity and Lyapunov Exponents in Many-Body Hamiltonian Systems}.

\bibliographystyle{unsrtnat}
\bibliography{references}

\end{document}